\documentclass[onecolumn,prx,superscriptaddress,longbibliography,nofootinbib]{revtex4-2}

\usepackage[dvips]{graphicx} 
\usepackage{amsfonts,amscd,amsmath,amsthm}
\usepackage{enumerate}
\usepackage{epsfig}
\usepackage{subfigure}
\usepackage{xcolor}
\usepackage[colorlinks = true]{hyperref}
\usepackage{physics}
\usepackage{epstopdf}
\usepackage{framed}
\usepackage{multirow}
\usepackage{color}
\usepackage{longtable}
\usepackage{comment}
\usepackage[ruled,vlined]{algorithm2e}
\usepackage[most]{tcolorbox}
\usepackage{braket}
\graphicspath{{./figure/}}

\usepackage{tikz}
\usetikzlibrary{tikzmark, calc, fit, positioning}
\usetikzlibrary{shapes}

\newtheorem{theorem}{Theorem}
\newtheorem{lemma}{Lemma}

\newtheorem{proposition}{Proposition}

\newtcolorbox[auto counter]{mybox}[2][]{
	enhanced,
	breakable,
	colback=blue!5!white,
	colframe=blue!75!black,
	fonttitle=\bfseries,
	title=Box \thetcbcounter: #2,#1
}

\newcommand{\zy}[1]{\textcolor{black}{#1}}

\begin{document}
\title{Uncovering Non-Gaussianity through Multi-Copy Symmetries}
\author{Hao Dai}
\email{dhao@bimsa.cn}
\affiliation{Beijing Institute of Mathematical Sciences and Applications, Beijing 101408, China}

\author{Yue Zhang}
\email{Corresponding author, zhangyue115@amss.ac.cn}
\affiliation{State Key Laboratory of Mathematical Sciences, Academy of Mathematics and Systems Science, Chinese Academy of Sciences,  Beijing 100190, China}
\affiliation{School of Mathematical Sciences, University of Chinese Academy of Sciences, Beijing 100049, China}

\begin{abstract}
Gaussian states are fundamental in continuous-variable quantum information, yet characterizing non‑Gaussianity remains challenging due to the non‑convexity of the Gaussian set. Existing witnesses typically rely on Wigner negativity or other  information‑theoretic quantities. In this work, we develop a group‑theoretic, multi‑copy approach to detect non‑Gaussianity in bosonic systems. We study passive linear optical transformations that mix copies of a quantum state and analyze their commutation with identical Gaussian unitaries applied to each copy. Orthogonal copy‑mixing transformations commute with the symplectic part of the Gaussian action, while the displacement part restricts the symmetry to the stabilizer of the collective mode. This structure yields a family of witnesses satisfied by all single‑mode Gaussian states. Fixing the thermal reference parameter via the purity, violation of these identities certifies non‑Gaussianity. We illustrate the method with several single‑mode examples and present an experimental protocol based on passive interferometry and photon‑number‑resolved detection, showing that the relevant multi‑copy expectation values can be estimated from bounded phase observables. Finally, we extend the construction to multi‑mode systems and discuss how the same symmetry framework may lead to quantitative measures of non‑Gaussianity.
\end{abstract}

\maketitle

\section{Introduction}

Many quantum technologies rely on quantum states that are experimentally accessible, stable, and controllable in physical platforms \cite{Bach2004,Biam2017}. Among them, Gaussian states, including coherent states, thermal states, squeezed states, occupy a distinguished position in continuous-variable quantum information. \cite{Gros2003,Pira2006,Fern2008,DAur2009,Daem2010}. Gaussian states also play a central role in the theoretical description of continuous-variable quantum systems \cite{Brau2005,Weed2012}. Analogously to normal distributions (classical Gaussian distribution) in probability theory, Gaussian states possess a number of remarkble and structural properties \cite{Cush1971,Hole1975,Hole1982,Gior2010,Weed2012}. For example, for a fixed covariance matrix, Gaussian states minimize several information-theoretic quantities, including distillable secret key rate and entanglement measures \cite{Wolf2006}. They also provide optimal resources for certain interferometric tasks through suitable distributions of squeezing \cite{Oliv2007}, and they saturate information-theoretic refinements of the Heisenberg uncertainty relation \cite{Fu2020,Zhan2022}.

While Gaussian states lie at the heart of continuous-variable quantum information, the Gaussian restriction also imposes fundamental limitations. For instance, non-Gaussianity is essential for achieving enhanced teleportation efficiency and robustness against decoherence \cite{Dell2010}. Non-Gaussian states can also outperform Gaussian states in quantum cloning, achieving higher optimal single-clone fidelity \cite{Cerf2005}, and can provide improved sensitivity in quantum metrology \cite{Xu2022}. Accordingly, non-Gaussian states have been employed to improve a range of quantum protocols \cite{Opat2000,Cerf2005,Dell2010,Helm2014,Lee2019,Ra2020,Wals2021,Xu2022,Tian2025,Guo2026}. This resource-theoretic perspective is particularly clear in the Gottesman--Kitaev--Preskill framework. In particular, non-Gaussian states are naturally associated with magic-state resources for fault-tolerant quantum computation \cite{PhysRevAGottesman2001,Hahn2025PRX}.

This resource-theoretic viewpoint has motivated extensive investigations of non‑Gaussianity \cite{Alle2010,Alle2012,Saba2017,Taka2018,Alba2018,Chit2019}.  Furthermore, it has been proven that non-Gaussianity cannot be broadcast via Gaussian operations \cite{Chat2026}. A variety of approaches have also been developed to detect and quantify non-Gaussianity in bosonic quantum states from different perspectives \cite{Geno2010,Geno2013,Mari2013,Ghiu2013,Hugh2014,Son2015,Baek2018,Fu20201,Zhan2020,Park2021,Malp2023,Turn2025}, such as relative entropy \cite{Mari2013,Son2015}, Wigner logarithmic negativity \cite{Geno2013}, uncertainty‑based quantifiers  \cite{Baek2018,Fu20201}, and distance measures \cite{Ghiu2013,Zhan2020}.

Motivated by these considerations, we develop a symmetry-based multi-copy method for detecting non-Gaussianity. The central idea is to study passive linear-optical transformations that mix identical copies of a bosonic state. We show that Gaussian states obey specific symmetry constraints under such copy-mixing transformations, while non-Gaussian states can violate them. This leads to experimentally accessible witnesses of non-Gaussianity.

The paper is organized as follows. In Sec.~II, we introduce the notation and review the basic representation-theoretic tools and definitions used in the paper. In Sec.~III, we develop the single-mode non-Gaussianity witness from orthogonal copy-mixing symmetries and illustrate it with several examples. In Sec.~IV, we present an interferometric measurement protocol with photon-number-resolved detection and analyzes its sample complexity. In Sec.~V, we extend the construction to multi-mode bosonic systems. Finally, Sec.~VI summarizes the results and discusses possible extensions toward quantitative measures of non-Gaussianity.

\section{Preliminaries}

In this section, we clarify the notations and review some basic concepts, including the definitions and representations of relevant Lie groups and Lie algebras, as well as the definition of Gaussian states.

\subsection{Lie Groups and Lie Algebras }

We first fix the notation for the Lie groups and Lie algebras used throughout this paper.

 Let $GL_d(\mathbb{C})$ and $GL_d(\mathbb{R})$ denote the groups of invertible $d\times d$ complex and real matrices, respectively. The corresponding associative algebras are denoted by $M_d(\mathbb{C})$ and $M_d(\mathbb{R})$, which are all $d\times d$ complex and real matrices, respectively.
 
The unitary group is 
\[
    U(d)=\{U\in GL_d(\mathbb{C}) : U^\dagger U=UU^\dagger=I\}.
\]
Its Lie algebra is 
\[
    \mathfrak{u}(d) =    \{X\in M_d(\mathbb{C}) : X^\dagger=-X\},
\]
namely the space of anti-Hermitian matrices. 

The real symplectic group is 
\begin{equation}\label{eq:symgroup}
    Sp(2d,\mathbb{R})
    =
    \{A\in GL_{2d}(\mathbb{R}) : A^T\Omega A=\Omega\},
    \qquad
    \Omega=
    \begin{pmatrix}
        0&I_d\\
        -I_d&0
    \end{pmatrix}.
\end{equation}
  Its Lie algebra is 
\begin{equation}\label{eq:symalgebra}
    \mathfrak{sp}(2d,\mathbb{R})
    =
    \{X\in M_{2d}(\mathbb{R}) : X^T\Omega+\Omega X=0\}.
\end{equation}
Equivalently, since \(\Omega^2=-I_{2d}\), \zy{the condition $X^T\Omega+\Omega X=0$} can be written as
\[
    \Omega X^T\Omega = X.
\]
The real orthogonal group is 
\begin{equation}\label{eq:orthogonal group}
    O(d)
    =
    \{O\in GL_d(\mathbb{R}) : O^T O=OO^T=I\},
\end{equation}
with the Lie algebra
\begin{equation}\label{eq:ortho_algebra}
     \mathfrak{o}(d)
    =
    \{X\in M_d(\mathbb{R}) : X^T=-X\}.
\end{equation}
Equivalently, one often denotes this Lie algebra by \(\mathfrak{so}(d)\), since
\(O(d)\) and \(SO(d)\) have the same Lie algebra \cite{Hall2015}.

Finally, the \((2d+1)\)-dimensional Heisenberg Lie algebra \(\mathfrak{h}_{2d+1}\) is the real vector space $\mathbb{R}^{2d+1}$ with basis \cite{woit2017quantum}
\begin{equation}
    X_1,\ldots,X_d,\quad Y_1,\ldots,Y_d,\quad Z,
\end{equation}
whose Lie bracket is defined by 
\[
    [X_i,Y_j]=\delta_{ij}Z,
    \qquad
    [X_i,X_j]=[Y_i,Y_j]=[X_i,Z]=[Y_i,Z]=0.
\]
Thus, every element \(X\in\mathfrak{h}_{2d+1}\) can be written uniquely as
\[
    X=\sum_{i=1}^d x_iX_i+\sum_{i=1}^d y_iY_i+zZ,
    \qquad
    x_i,y_i,z\in\mathbb{R}.
\]
The corresponding simply connected Heisenberg group \(H_{2d+1}\) is obtained by exponentiating \(\mathfrak{h}_{2d+1}\). Since \(\mathfrak{h}_{2d+1}\) is nilpotent \zy{(that is, all Lie brackets of Lie brackets are
also zero, \([\mathfrak{h}_{2d+1},\mathfrak{h}_{2d+1}]=\mathbb R Z,\ [\mathfrak{h}_{2d+1},\mathbb R Z]=0\) ), the group law is determined by the Baker--Campbell--Hausdorff formula}, and the exponential map is a global diffeomorphism. 
\subsection{Metaplectic Representation and the Second Quantization}

We now recall the Bargmann--Fock realization of the bosonic representation used in this paper.  For a $d$-mode bosonic system, the Fock space is 
\[
    \mathcal{F}_d
    =
    \operatorname{span}\{
    \ket{n_1,\ldots,n_d}: n_j\in \mathbb{N}
    \},
\]
where \(\ket{n_1,\ldots,n_d}\) denotes the occupation-number basis. Let \(a_j^\dagger\) and \(a_j\) be the creation and annihilation operators satisfying the canonical commutation relations
\begin{equation}\label{eq:commut_relat}
    [a_j,a_k^\dagger]=\delta_{jk}I,
    \qquad
    [a_j,a_k]=[a_j^\dagger,a_k^\dagger]=0,\qquad \zy{j,k=1,2,\cdots,d}.
\end{equation}
 We write
\[
    \boldsymbol{a}
    =
    (a_1,\ldots,a_d)^T,
    \qquad
    \boldsymbol{a}^\dagger
    =
    (a_1^\dagger,\ldots,a_d^\dagger)^T.
\]
In the Bargmann--Fock representation, \(a_j^\dagger\) \zy{acts as} multiplication by \zy{the complex variable} \(z_j\), while \(a_j\) is represented by the holomorphic derivative \(\partial_{z_j}\). Following Woit's convention \cite{woit2017quantum}, complex coordinates \(z_j,\overline z_j\) on phase space are quantized by creation and annihilation operators. In particular, quadratic functions in \(z_j\) and \(\overline z_j\) are quantized as quadratic combinations of \(a_j^\dagger\) and \(a_j\).

We first consider the number-preserving quadratic sector. For \(A=(A_{jk})\in\mathfrak{gl}(d,\mathbb C)\), define 
\begin{equation}\label{eq:algrep}
     \Gamma'(A)
    =
    \sum_{j,k=1}^d A_{jk}a_j^\dagger a_k .
\end{equation}
Equivalently,
\[
    \Gamma'(A)
    =
    ( \boldsymbol{a}^\dagger)^T A \boldsymbol{a}.
\]
This is the bosonic second quantization of the operator \(A\).

Using the canonical commutation relations \eqref{eq:commut_relat}, one obtains
\begin{equation*}
    \begin{aligned}
    [a_j^\dagger a_k,a_l^\dagger a_m]
    &= (a_j^\dagger a_k a_l^\dagger a_m-a_j^\dagger a_l^\dagger a_k a_m)+ (a_l^\dagger a_j^\dagger a_m a_k-a_l^\dagger a_m a_j^\dagger a_k)\\
    &=  \delta_{kl}a_j^\dagger a_m-\delta_{jm}a_l^\dagger a_k . 
    \end{aligned}
\end{equation*}
Therefore,
\[
    [\Gamma'(A),\Gamma'(B)]
    =
    \Gamma'([A,B]),
    \qquad
    A,B\in\mathfrak{gl}(d,\mathbb{C}),
\]
which indicates that \(\Gamma'\) is a Lie algebra representation of \(\mathfrak{gl}(d,\mathbb{C})\).

The adjoint action on the creation and annihilation operators is
\[
    [\Gamma'(A),\boldsymbol{a}^\dagger]
    =
    A^T\boldsymbol{a}^\dagger,
    \qquad
    [\Gamma'(A),\boldsymbol{a}]
    =
    -A\boldsymbol{a}.
\]
Then, for the matrix $e^A$, its metaplectic representation is 
\begin{equation}\label{eq:alggro}
     \Gamma(e^A)=e^{ \Gamma'(A)}.
\end{equation}
As a result,
\begin{equation}\label{eq:metapro}
    \begin{aligned}
\Gamma(e^A)\boldsymbol{a}^{\dagger}\Gamma(e^A)^{-1}&=e^{A^T}\boldsymbol{a}^{\dagger},\\
\Gamma(e^A)\boldsymbol{a}\Gamma(e^A)^{-1}&=e^{-A}\boldsymbol{a}.
    \end{aligned}
\end{equation}
The above formulas are algebraic and hold for \(A\in\mathfrak{gl}(d,\mathbb{C})\). To obtain unitary operators, one restricts to the real form
\[
    \mathfrak{u}(d)
    =
    \{A\in\mathfrak{gl}(d,\mathbb{C}): A^\dagger=-A\}.
\]
For \(A\in\mathfrak{u}(d)\), the operator \(\Gamma'(A)\) is anti-Hermitian on the finite-particle domain, \zy{that is, }
\[\Gamma' (A)^\dag=-\Gamma(A),\]
and therefore
\[
    \Gamma(e^A)
    =
    e^{\Gamma'(A)}
\]
is unitary. Since \(A^\dagger=-A\), we also have
\[
    e^{-A}=e^{\overline{A^T}},
\]
so that
\[
    \Gamma(e^A)\boldsymbol{a}\Gamma(e^A)^{-1}
    =
   e^{\overline{A^T}}\boldsymbol{a},\qquad \zy{A\in\mathfrak{u}(d)}.
\]

We now explain how the above number-preserving representation fits into the full metaplectic representation. The real symplectic group \(Sp(2d,\mathbb R)\) acts linearly on phase space and preserves the canonical commutation relations. Its double cover is the metaplectic group, \[ Mp(2d,\mathbb R)\longrightarrow Sp(2d,\mathbb R). \] The metaplectic representation is a unitary representation of this double cover on bosonic Fock space.

The subgroup \(U(d)\) can be embedded into \(Sp(2d,\mathbb R)\) as the subgroup of number-preserving symplectic transformations. Its preimage in the metaplectic group is a double cover, denoted by 
\[ \widetilde U(d)\subset Mp(2d,\mathbb R). \]

\zy{Normal ordering places all creation operators to the left of all annihilation operators, such as $a^\dag a,$ whereas Weyl (symmetric) ordering averages over all possible orderings of the creation and annihilation operators, such as $(a^\dag a+a a^\dag)/2$.} In the Weyl quantization convention, the infinitesimal metaplectic generator differs from the normally ordered one by a scalar term, 
\[
    d\mathcal{M}(A)
    =
    \sum_{j,k=1}^d A_{jk}a_j^\dagger a_k
    +
    \frac{1}{2}\operatorname{Tr}(A)I.
\]
This scalar term arises from the distinction between Weyl ordering and normal ordering. It does not affect the adjoint action on \(a_j\) and \(a_j^\dagger\), and hence is irrelevant for the mode transformations used below. However, it is relevant for the precise metaplectic lift and for tracking the phase associated with the double-cover structure. Thus the normally ordered formula is sufficient for computing mode transformations, while the metaplectic representation itself is naturally a representation of the double cover \(\widetilde{U}(d)\) \cite{folland2016harmonic}.

We also recall the corresponding representation of the Heisenberg group. In the one-mode case, elements of the complexified Heisenberg Lie algebra $\mathfrak{h}_3\otimes \mathbb{C}$ have the form $i\alpha z+\beta \bar{z}+\gamma $ with $\alpha,\beta,\gamma\in \mathbb{C}$ in the Fock-Bargmann representations. From Hermitian conjugated property, the real Lie algebra $\mathfrak{h}_3$ is composed of elements that have the form $i\alpha z-i\bar{\alpha}\bar{z}+\gamma$ with $\alpha\zy{=-i\bar\beta} \in \mathbb{C},\gamma\in \mathbb{R}$. Hence, the elements of the real Lie algebra $\mathfrak{h}_3$ is determined by two numbers, $\alpha$ and $\gamma$. Define
\[
    \Gamma'(\alpha,\gamma)
    =
    \alpha a^\dagger-\overline{\alpha}a-i\gamma I, \qquad \alpha \in \mathbb{C},\gamma\in \mathbb{R},
\]
which is anti-Hermitian, and hence
\[
    \Gamma(\alpha,\gamma)
    =
    \exp\left(
    \alpha a^\dagger-\overline{\alpha}a-i\gamma I
    \right), \qquad \alpha \in \mathbb{C},\gamma\in \mathbb{R},
\]
is unitary. This gives the Schrödinger, or Weyl, representation of the one-mode Heisenberg group \(H_3\).

If the central phase is omitted, one obtains the displacement operator
\begin{equation}\label{eq:dis}
     D(\alpha)
    =
    \exp\left(
    \alpha a^\dagger-\overline{\alpha}a
    \right).
\end{equation}
    
The displacement operators satisfy the Weyl relation
\[
    D(\alpha)D(\beta)
    =
    \exp\left(
    \frac{1}{2}
    \left(
    \alpha\overline{\beta}
    -
    \overline{\alpha}\beta
    \right)
    \right)
    D(\alpha+\beta).
\]
Therefore \(D(\alpha)\) is a projective representation of the phase-space translation group. Including the central phase lifts this projective
representation to a genuine unitary representation of the Heisenberg group.

For \(d\) modes, the displacement operator is
\[
    D(\boldsymbol{\alpha})
    =
    \exp\bigg(
    \sum_{j=1}^d
    \alpha_j a_j^\dagger
    -
    \overline{\alpha_j}a_j
    \bigg),
    \qquad
    \boldsymbol{\alpha}\in\mathbb{C}^d.
\]
Together with the central phase, these operators give the Weyl representation of \(H_{2d+1}\). 

\subsection{Symplectic Representation in the Complex Basis}

We now recall the representation of the real symplectic group in the complex basis of bosonic mode operators \cite{Weedbrook2012Gaussian,adesso2014continuous}. Let
\[
    \mathbf{r}
    =
    (q_1,\ldots,q_d,p_1,\ldots,p_d)^T
\]
be the quadrature vector satisfying
\[
    [r_j,r_k]=i\Omega_{jk},
    \qquad
    \Omega=
    \begin{pmatrix}
        0&I_d\\
        -I_d&0
    \end{pmatrix}.
\]
We consider a slightly different representation of the symplectic group. For a real symplectic group, we consider the complexification of the Lie algebra. A symplectic matrix \(S\) acts linearly on the quadrature operators by
\[
    \mathbf{r}\longmapsto S\mathbf{r}.
\]

It is often convenient to pass from the quadrature basis to the complex
basis of annihilation and creation operators. Define
\[
    \xi
    =
    (a_1,\ldots,a_d,a_1^\dagger,\ldots,a_d^\dagger)^T.
\]
This vector should be distinguished from \(\boldsymbol{a}=(a_1,\ldots,a_d)^T\), since \(\xi\) contains both annihilation and creation operators. The two bases are related by
\[
    \xi=L_{(c)}\mathbf{r},
    \qquad
    L_{(c)}
    =
    \frac{1}{\sqrt 2}
    \begin{pmatrix}
        I_d&iI_d\\
        I_d&-iI_d
    \end{pmatrix}.
\]

Accordingly, the complex form of a real symplectic matrix \(S\) is defined by
\[
    S^{(c)}
    =
    L_{(c)}SL_{(c)}^\dagger .
\]
This is not a complexification of the symplectic group, but a change of basis. Although \(S^{(c)}\) is not a real symplectic matrix, it satisfies the equivalent invariance condition
\[
    S^{(c)}K S^{(c)\dagger}=K,
    \qquad
    K=
    \begin{pmatrix}
        I_d&0\\
        0&-I_d
    \end{pmatrix}.
\]
Equivalently, \(S^{(c)}\) has the Bogoliubov form
\[
    S^{(c)}
    =
    \begin{pmatrix}
        \alpha&\beta\\
        \overline{\beta}&\overline{\alpha}
    \end{pmatrix},
\]
where
\[
    \alpha\alpha^\dagger-\beta\beta^\dagger=I_d,
    \qquad
    \alpha\beta^T=(\alpha\beta^T)^T .
\]
The corresponding Lie algebra in the complex basis consists of matrices of
the form
\[
    -iKH,
\]
where \(H\) is Hermitian and has the block structure
\[
    H=
    \begin{pmatrix}
        A&B\\
        \overline{B}&\overline{A}
    \end{pmatrix},
    \qquad
    A=A^\dagger,\qquad B=B^T.
\]
Thus, whenever \(S^{(c)}\) can be written as a single exponential,
\[
    S^{(c)}=e^{-iKH},
\]
one obtains a quadratic unitary operator
\[
   \Lambda(S):= U_{S^{(c)}}
    =
    \exp\!\left(
        -\frac{i}{2}\xi^\dagger H\xi
    \right).
\]

Using the canonical commutation relations in the compact form
\[
    [\xi_m,\xi_n^\dagger]=K_{mn},
\]
one finds
\[
    U_{S^{(c)}}^\dagger\xi U_{S^{(c)}}
    =
    e^{-iKH}\xi
    =
    S^{(c)}\xi .
\]
Therefore quadratic unitary operators implement Bogoliubov transformations on the vector of mode operators.

Strictly speaking, the map $\Lambda$ from symplectic matrices to quadratic unitaries is not a single-valued representation of \(Sp(2d,\mathbb{R})\). Rather, it gives the metaplectic representation, namely a genuine unitary representation of the double cover \(Mp(2d,\mathbb{R})\). Equivalently, for a given \(S\in Sp(2d,\mathbb{R})\), the corresponding unitary is determined only up to an overall sign, which is physically irrelevant for the adjoint action on operators.

If \(S\) can be written as
\begin{equation}
S=e^{iX},
\end{equation}
then in the complex basis, one has
\begin{equation}
S^{(c)}=L_{(c)} e^{iX} L_{(c)}^\dagger=e^{ i L_{(c)} X L_{(c)}^\dagger}.
\end{equation}

Comparing this with
\[
    S^{(c)}=e^{-iKH},
\]
we identify
\begin{equation}
i L_{(c)} X L_{(c)}^\dagger=- iKH,
\end{equation}
equivalently,
\begin{equation}
L_{(c)} X L_{(c)}^\dagger=-KH,
\qquad
H=-K L_{(c)} X L_{(c)}^\dagger.
\end{equation}
Hence, the corresponding quadratic unitary may be written as
\begin{equation}\label{eq:symrep}
\Lambda(e^{iX})=
\exp\!\left(\frac{i }{2}
\xi^\dagger K L_{(c)} X L_{(c)}^\dagger \xi
\right).
\end{equation}

\subsection{Gaussian States}

We briefly recall the representation of single-mode Gaussian states. A quantum state is called Gaussian if its characteristic function, equivalently its Wigner function, is a Gaussian function on phase space. In the single-mode case, every Gaussian state can be obtained from a thermal state by applying a displacement and a squeezing operation. Namely, it can be written as 
\begin{equation}
    \rho_G
    =
    D(\alpha)S(\zeta)\rho_\lambda S^\dagger(\zeta)D^\dagger(\alpha),
\end{equation}
where $D(\alpha)$ is the displacement operator defined in Eq. \eqref{eq:dis}, and \begin{equation}
    S(\zeta)
    =
    \exp\!\left[
        \frac{1}{2}\zeta a^{\dagger 2}
        -
        \frac{1}{2}\overline{\zeta}a^2
    \right],
    \qquad
    \zeta=re^{i\theta},
\end{equation}
is the squeezing operator. The reference state \(\rho_\lambda\) is the single-mode thermal state
\begin{equation}
    \rho_\lambda
    =
    \frac{\lambda^{a^\dagger a}}{\operatorname{Tr}\lambda^{a^\dagger a}}
    =
    (1-\lambda)\sum_{n=0}^{\infty}\lambda^n\ket{n}\bra{n},
    \qquad
    0\leq \lambda<1.
\end{equation}
Here \(\lambda=0\) gives the vacuum state, $\rho_0=\ket{0}\bra{0}$. For \(0<\lambda<1\), the state \(\rho_\lambda\) is mixed and has mean photon number
\begin{equation}
    \overline{n}
    =
    \operatorname{Tr}(\rho_\lambda a^\dagger a)
    =
    \frac{\lambda}{1-\lambda}.
\end{equation}
Equivalently,
\begin{equation}
    \lambda
    =
    \frac{\overline{n}}{\overline{n}+1}.
\end{equation}
In the quadrature representation, with
\begin{equation}
    q=\frac{a+a^\dagger}{\sqrt{2}},
    \qquad
    p=\frac{a-a^\dagger}{i\sqrt{2}},
\end{equation}
a Gaussian state is fully specified by its first moments and covariance matrix. The displacement operator changes only the first moments, while the squeezing operator changes the covariance matrix. For the thermal state \(\rho_\lambda\), the covariance matrix is proportional to the identity,
\begin{equation}
    \sigma_\lambda
    =
    (2\overline{n}+1)I_2
    =
    \frac{1+\lambda}{1-\lambda}I_2.
\end{equation}
Therefore the most general single-mode Gaussian state is obtained by displacing and symplectically transforming this thermal covariance matrix.

We now recall the definition of multimode Gaussian states. Consider an
\(N\)-mode bosonic system with annihilation and creation operators
\(a_j,a_j^\dagger\), \(j=1,\ldots,N\), satisfying
\[
    [a_j,a_k^\dagger]=\delta_{jk}I,
    \qquad
    [a_j,a_k]=[a_j^\dagger,a_k^\dagger]=0.
\]
The corresponding quadrature operators are
\[
    q_j=\frac{a_j+a_j^\dagger}{\sqrt{2}},
    \qquad
    p_j=\frac{a_j-a_j^\dagger}{i\sqrt{2}}.
\]
We collect them into the phase-space operator vector
\[
    R=(q_1,p_1,\ldots,q_N,p_N)^T .
\]
Then
\[
    [R_j,R_k]=i\Omega_{jk},
\]
where
\[
    \Omega=\bigoplus_{j=1}^N
    \begin{pmatrix}
        0&1\\
        -1&0
    \end{pmatrix}.
\]

A quantum state \(\rho\) is called Gaussian if its characteristic function, or equivalently its Wigner function, is a Gaussian function on phase space.
The first moments of \(\rho\) are defined by
\[
    d_j=\langle R_j\rangle_\rho
    =
    \operatorname{Tr}(\rho R_j),
\]
and the covariance matrix is defined by
\[
    \sigma_{jk}
    =
    \langle R_jR_k+R_kR_j\rangle_\rho
    -
    2\langle R_j\rangle_\rho\langle R_k\rangle_\rho .
\]
With this convention, the covariance matrix of the vacuum state is the
identity matrix.

For an \(N\)-mode Gaussian state, the symmetrically ordered characteristic
function and Wigner function take the form
\[
    \chi_\rho(\xi)
    =
    \exp\left[
        -\frac{1}{4}\xi^T\Omega\sigma\Omega^T\xi
        -i(\Omega d)^T\xi
    \right],
    \qquad
    \xi\in\mathbb{R}^{2N},
\]
and
\[
    W_\rho(X)
    =
    \frac{1}{\pi^N\sqrt{\det\sigma}}
    \exp\left[
        -(X-d)^T\sigma^{-1}(X-d)
    \right],
    \qquad
    X\in\mathbb{R}^{2N}.
\]
Thus a Gaussian state is completely characterized by the pair
\[
    (d,\sigma).
\]
Not every real symmetric matrix can be the covariance matrix of a physical
state. The canonical commutation relations imply the Robertson-Schrödinger uncertainty relation
\[
    \sigma+i\Omega\geq 0.
\]
For Gaussian states, this condition is necessary and sufficient for \(\sigma\) to define a physical density operator.

Equivalently, every \(N\)-mode Gaussian state can be generated from a product of single-mode thermal states by a Gaussian unitary and a displacement. 
Let
\[
    \rho_{\boldsymbol{\lambda}}
    =
    \bigotimes_{j=1}^N \rho_{\lambda_j},
\]
where
\[
    \rho_{\lambda_j}
    =
    (1-\lambda_j)\sum_{n=0}^{\infty}
    \lambda_j^n\ket{n}_j\bra{n},
    \qquad
    0\leq \lambda_j<1.
\]
The mean photon number of the \(j\)-th thermal mode is
\[
    \overline{n}_j
    =
    \frac{\lambda_j}{1-\lambda_j},
\]
and its covariance matrix is
\[
    \sigma_{\lambda_j}
    =
    (2\overline{n}_j+1)I_2
    =
    \frac{1+\lambda_j}{1-\lambda_j}I_2.
\]
Hence the covariance matrix of the product thermal state is
\[
    \sigma_{\boldsymbol{\lambda}}
    =
    \bigoplus_{j=1}^N
    \frac{1+\lambda_j}{1-\lambda_j}I_2.
\]

Let \(S\in Sp(2N,\mathbb{R})\) be a real symplectic matrix and let \(\Lambda(S)\) be the corresponding Gaussian unitary, i.e. the metaplectic operator whose action on first and second moments is
\[
    d\mapsto Sd,
    \qquad
    \sigma\mapsto S\sigma S^T .
\]
Then the most general \(N\)-mode Gaussian state can be written as
\[
    \rho_G
    =
    D(\alpha)\,
    U_S
    \rho_{\boldsymbol{\lambda}}
    U_S^\dagger
    D^\dagger(\alpha),
\]
where $ D(\alpha)$ is the multimode displacement operator. The displacement fixes the first moments, while the symplectic transformation fixes the covariance matrix.
                                                                                                                                                         
In particular, the covariance matrix of \(\rho_G\) is
\[
    \sigma_G
    =
    S\sigma_{\boldsymbol{\lambda}}S^T,
\]
and its first moments are determined by the displacement vector \(\alpha\). For \(\lambda_j=0\) for all \(j\), the reference state is the multimode vacuum state
\[
    \rho_{\boldsymbol{0}}
    =
    \ket{0,\ldots,0}\bra{0,\ldots,0},
\]
As a conclusion, pure Gaussian states are obtained from the vacuum by displacement and symplectic transformations, whereas mixed Gaussian states are obtained from thermal states in the same way.
 
\section{Single-Mode Non-Gaussianity Witness }

In this section, we study the commutant structure of the Jacobi group, namely the semidirect product of the Heisenberg group and the symplectic group. This structure will be used to construct criteria for detecting non-Gaussianity of quantum states. We first treat the single-mode case in this section and then extend the argument to multi-mode systems in the later section.

\subsection{Symmetry of Multi-Copy Gaussian States }

Let \(S\in Sp(2,\mathbb{R})\). In the complex basis
\[
    \xi=
    \begin{pmatrix}
        a\\
        a^\dagger
    \end{pmatrix},
\]
the symplectic transformation is represented by a Bogoliubov matrix
\[
    S^{(c)}
    =
    \begin{pmatrix}
        \alpha&\beta\\
        \overline{\beta}&\overline{\alpha}
    \end{pmatrix},
    \qquad
    |\alpha|^2-|\beta|^2=1.
\]
Equivalently, the action of the corresponding metaplectic operator \(\Lambda(S)\) on the mode operators is
\[
    \Lambda(S)\xi\Lambda(S)^\dagger
    =
    S^{(c)}\xi .
\]

A general parametrization is
\[
    \alpha=e^{i\phi}\cosh r,
    \qquad
    \beta=e^{i\theta}\sinh r,
    \qquad
    r\geq 0,\quad \phi,\theta\in\mathbb{R}.
\]
For example, the single-mode squeezing operator
\[
    S(\zeta)
    =
    \exp\left[
        \frac{1}{2}\zeta a^{\dagger 2}
        -
        \frac{1}{2}\overline{\zeta}a^2
    \right],
    \qquad
    \zeta=re^{i\theta},
\]
corresponds to
\[
    \alpha=\cosh r,
    \qquad
    \beta=e^{i\theta}\sinh r,
\]
under the convention
\[
    S(\zeta)aS(\zeta)^\dagger
    =
    a\cosh r+e^{i\theta}a^\dagger\sinh r.
\]

We now consider \(k\) identical copies of a single-mode system. Let
\[
    \boldsymbol{a}=(a_1,\ldots,a_k)^T
\]
be the vector of annihilation operators for the \(k\) copies.  According to Eq.\eqref{eq:metapro},  for \(O\in O(k)\),  the passive linear-optical unitary \(\Gamma(O)\) acts on the copy index by
\[
    \Gamma(O)\boldsymbol{a}\Gamma(O)^{-1}
    =
    O^T\boldsymbol{a}.
\]
Given the action of orthogonal operators on multi-copy mode, we claim that any such orthogonal operator commutes with the tensor product of the same symplectic operator acting on each copy.
\begin{theorem}\label{th:com_sym}
For every \(O\in O(k)\) and every \(S\in Sp(2,\mathbb{R})\), there is
\begin{equation}
     \left[
        \Gamma(O),\Lambda(S)^{\otimes k}
    \right]=0.
\end{equation}
\end{theorem}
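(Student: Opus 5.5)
The plan is to compare the two products $\Gamma(O)\Lambda(S)^{\otimes k}$ and $\Lambda(S)^{\otimes k}\Gamma(O)$ through their adjoint action on the mode operators. Both are unitaries on $\mathcal F_k$. If they act identically on all $a_j,a_j^\dagger$, then by irreducibility of the Fock representation of the canonical commutation relations they agree up to a global phase. We then remove that phase with a connectedness or explicit vacuum argument.

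\textbf{Step 1.} We compute the conjugation action on the vector $\boldsymbol a=(a_1,\dots,a_k)^T$. By Eq.~\eqref{eq:metapro}, $\Gamma(O)\boldsymbol a\Gamma(O)^{-1}=O^T\boldsymbol a$ and $\Gamma(O)\boldsymbol a^\dagger\Gamma(O)^{-1}=O^T\boldsymbol a^\dagger$. The second identity holds because $O$ is real, so $\overline{O^T}=O^T$; this reality is exactly where orthogonal (rather than general unitary) copy-mixing is needed. The tensor power acts copywise:
\[
\Lambda(S)^{\otimes k}\boldsymbol a\Lambda(S)^{\otimes k\dagger}=\alpha\boldsymbol a+\beta\boldsymbol a^\dagger .
\]
Since $O^T$ is real, it commutes with multiplication by the scalars $\alpha$ and $\beta$. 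Hence the two orders of composition give the same result:
\[
O^T(\alpha\boldsymbol a+\beta\boldsymbol a^\dagger)=\alpha O^T\boldsymbol a+\beta O^T\boldsymbol a^\dagger .
\]
The same holds for $\boldsymbol a^\dagger$. Therefore $W=\Gamma(O)\Lambda(S)^{\otimes k}\Gamma(O)^{-1}\Lambda(S)^{\otimes k\dagger}$ commutes with every $a_j$ and $a_j^\dagger$. By irreducibility (Schur's lemma), $W=c\,I$ with $|c|=1$.

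\textbf{Step 2.} We show $c=1$. First consider $O=e^{X}$ in the identity component, with $X\in\mathfrak o(k)$, and $\Lambda(S)=\exp(-\tfrac i2\xi^\dagger H\xi)$. It then suffices to show that the generators commute exactly, with no scalar left over. We have $\Gamma'(X)=\sum_{jl}X_{jl}a_j^\dagger a_l$, and the generator of $\Lambda(S)^{\otimes k}$ is $\sum_j h(a_j,a_j^\dagger)$ with $h$ quadratic. Their commutator is $\sum_{jl}X_{jl}\,[a_j^\dagger a_l,\;h_j+h_l]$. The terms $a_j^\dagger a_l$, $a_ja_l$, $a_j^\dagger a_l^\dagger$ that appear carry coefficients $X_{jl}\mp X_{lj}$ combinations. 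These vanish because $X^T=-X$: the symmetric combination of an antisymmetric matrix is zero. Any constant arising from reordering is proportional to $\operatorname{Tr}X=0$. Thus the commutator of the generators is zero, the one-parameter groups commute, and $c=1$ on $SO(k)$.

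\textbf{Step 3.} For the other component of $O(k)$ it suffices to treat the single reflection $O=\mathrm{diag}(-1,1,\dots,1)$. Here $\Gamma(O)=(-1)^{a_1^\dagger a_1}$ is the parity on copy $1$. This parity commutes with $\Lambda(S)$, which is generated by even quadratics, so $c=1$ again. The step I expect to be the main obstacle is precisely this phase bookkeeping. It requires a precise definition of $\Gamma(O)$ for $\det O=-1$, since $O$ is not of the form $e^{A}$ with $A$ real, together with the double-cover sign ambiguity of $\Lambda(S)$. I would handle this by fixing $\Gamma(O)$ as the number-preserving unitary with the given mode action and vacuum eigenvalue $1$.

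Alternatively, the phase $c$ can be fixed by evaluating on the vacuum. Since $\Gamma(O)$ fixes $\ket{0}^{\otimes k}$, we get $c=\bra{\psi}\Gamma(O)\ket{\psi}$ with $\ket\psi=\Lambda(S)^{\otimes k}\ket{0}^{\otimes k}$. This equals $1$ because $\ket\psi$ is a product of identical squeezed vacua. Its Gaussian wavefunction in $\boldsymbol z$ is $\exp(\tfrac{t}{2}\boldsymbol z^T\boldsymbol z)$ for some $t\in\mathbb C$, which is $O(k)$-invariant.
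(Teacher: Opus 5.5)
Your proof is correct, but it is organized differently from the paper's. The paper never passes through a phase. It writes $\Lambda(S)^{\otimes k}=e^{-iG}$ with $G=\sum_\alpha\xi_\alpha^\dagger h\,\xi_\alpha$ and conjugates this generator by $\Gamma(O)$, using only the mode action $\Gamma(O)\xi_\alpha\Gamma(O)^{-1}=\sum_\beta O_{\beta\alpha}\xi_\beta$. The identity $\sum_\alpha O_{\beta\alpha}O_{\gamma\alpha}=\delta_{\beta\gamma}$ then returns $G$ exactly. That computation uses only the adjoint action of $\Gamma(O)$ and never a real logarithm of $O$, so it treats both components of $O(k)$ at once. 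Nothing needs fixing afterwards, because conjugation is an algebra automorphism applied to the generator itself.

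Your Step 1 isolates the same mechanism at the level of mode operators: reality of $O$ is what lets the copy mixing commute with the copywise Bogoliubov map. Schur's lemma then reduces the claim to showing $c=1$. Your Steps 2--3 fix $c$ by commuting generators with generators, which is why you must take $\log O\in\mathfrak o(k)$ and split off the reflection via parity. They also inherit the paper's tacit assumption that $\Lambda(S)$ is a single exponential. That assumption is false for some elements of $Sp(2,\mathbb R)$, e.g.\ those with trace below $-2$, though writing $\Lambda(S)$ as a product of exponentials repairs it.

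Your vacuum variant is the strongest form of your route. Step 1 holds for every $S$, and $c=\langle\psi|\Gamma(O)|\psi\rangle=1$ follows from the $O(k)$-invariance of $\exp(\tfrac t2\boldsymbol z^T\boldsymbol z)$. So it needs neither an exponential form of $\Lambda(S)$ nor separate treatment of $\det O=-1$. The metaplectic sign you flag is harmless in any case, since an overall scalar in $\Lambda(S)^{\otimes k}$ cancels in a commutator. The price is an appeal to irreducibility of the Fock representation, best phrased through Weyl operators to avoid domain issues, whereas the paper's argument is a purely algebraic identity.
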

\begin{proof}
By Eq.~\eqref{eq:symrep}, for a symplectic matrix written as $S=\exp
(iX)$, there exists a Hermitian matrix \(h\) such that the corresponding metaplectic operator can be written as
\begin{equation}
   \Lambda(S)=\exp\!\left(-i\xi^\dagger h\xi\right),
\end{equation}
where $ \xi=(a,a^\dagger)^T .$

For the \(k\)-copy system, define
\[
    \xi_\alpha
    =\begin{pmatrix}
        a_\alpha\\
        a_\alpha^\dagger
    \end{pmatrix},
    \qquad
    \alpha=1,\ldots,k .
\]
Since the same symplectic transformation acts on each copy, the quadratic generator of \(\Lambda(S)^{\otimes k}\) is
\begin{equation}
    G
    :=
    \sum_{\alpha=1}^k
    \xi_\alpha^\dagger h_\alpha \xi_\alpha ,
\end{equation}
where \(h_\alpha=h\) for all \(\alpha\). Hence
\begin{equation}
    \Lambda(S)^{\otimes k}
    =
    e^{-iG}.
\end{equation}

Equivalently, if we introduce the block vector $ \Xi=    \bigoplus_{\alpha=1}^k \xi_\alpha$ and the block-diagonal matrix
$H=\bigoplus_{\alpha=1}^k h_\alpha,$ then the generator can be written compactly as
\begin{equation}
    G=\Xi^\dagger H\Xi .
\end{equation}
By Eq.~\eqref{eq:metapro}, the second-quantized orthogonal transformation \(\Gamma(O)\) mixes only the copy indices:
\begin{equation}
\begin{aligned}
    \Gamma(O)\boldsymbol{a}\Gamma(O)^{-1}
    &=
    \overline{O^T}\mathbf{a}
    =
    O^T\mathbf{a},\\
    \Gamma(O)\mathbf{a}^\dagger\Gamma(O)^{-1}
    &=
    O^T\mathbf{a}^\dagger .
\end{aligned}
\end{equation}
Here we used the fact that \(O\in O(k)\) is real. Therefore, for each copy \(\alpha\),
\begin{equation}
\begin{aligned}
    \Gamma(O)\xi_\alpha\Gamma(O)^{-1}
    &=
    \Gamma(O)
    \begin{pmatrix}
        a_\alpha\\
        a_\alpha^\dagger
    \end{pmatrix}
    \Gamma(O)^{-1}  \\
    &=
    \begin{pmatrix}
        \sum_{\beta=1}^k O_{\beta\alpha}a_\beta\\
        \sum_{\beta=1}^k O_{\beta\alpha}a_\beta^\dagger
    \end{pmatrix}  \\
    &=
    \sum_{\beta=1}^k O_{\beta\alpha}\xi_\beta .
\end{aligned}
\end{equation}

We now compute the conjugation of the quadratic generator. Since \(h_\alpha=h\) for all copies, we have 
\begin{equation}
\begin{aligned}
\Gamma(O) G \Gamma(O)^{-1}&=\sum_\alpha \Gamma(O)  \xi_\alpha^\dagger h_\alpha\xi_\alpha \Gamma(O)^{-1}\\
&=\sum_{\alpha=1}^k \sum_{i,j=1,2}\Gamma(O)\xi_{\alpha,i}^{\dagger}h_{\alpha,ij}\xi_{\alpha,j} \Gamma(O)^{-1}\\
&=\sum_{\alpha=1}^k \sum_{i,j=1,2}h_{\alpha,ij}\Gamma(O) \xi_{\alpha,i}^{\dagger}\Gamma(O)^{-1}\Gamma(O)\xi_{\alpha,j} \Gamma(O)^{-1}\\
&=\sum_{\alpha=1}^k \Gamma(O)  \xi_\alpha^\dagger  \Gamma(O)^{-1} h_\alpha \Gamma(O)\xi_\alpha \Gamma(O)^{-1}\\
&=\sum_{\alpha=1}^k \Gamma(O)  \xi_\alpha^\dagger  \Gamma(O)^{-1} h \Gamma(O)\xi_\alpha \Gamma(O)^{-1}\\
&=\sum_{\alpha=1}^k \Big(\sum_{\beta=1}^{k}O_{\beta\alpha}\xi_\beta^\dagger\Big) h \Big(\sum_{\gamma=1}^k O_{\gamma\alpha}\xi_\gamma\Big) \\
&=\sum_{\alpha,\beta,\gamma=1}^k O_{\beta\alpha}O_{\gamma\alpha}\xi_\beta^\dagger h \xi_\gamma.
\end{aligned}  
\end{equation}
Since \(O\) is orthogonal, its rows are orthonormal. Hence,
\begin{equation}
    \sum_{\alpha=1}^k
    O_{\beta\alpha}O_{\gamma\alpha}
    =
    \delta_{\beta\gamma}.
\end{equation}
Therefore,
\begin{equation}
\begin{aligned}
    \Gamma(O)G\Gamma(O)^{-1}
    &=
    \sum_{\beta,\gamma=1}^k
    \delta_{\beta\gamma}
    \xi_\beta^\dagger h\xi_\gamma  \\
    &=
    \sum_{\beta=1}^k
    \xi_\beta^\dagger h\xi_\beta
    =
    G .
\end{aligned}
\end{equation}
Thus, the quadratic generator \(G\) is invariant under conjugation by \(\Gamma(O)\). Exponentiating this identity gives
\begin{equation}
    \Gamma(O)\Lambda(S)^{\otimes k}\Gamma(O)^{-1}
    =
    \Lambda(S)^{\otimes k}.
\end{equation}
Equivalently,
\begin{equation}
    \Gamma(O)\Lambda(S)^{\otimes k}
    =
    \Lambda(S)^{\otimes k}\Gamma(O).
\end{equation}
\end{proof}
As a remark, the above theorem is actually a concrete instance of a more general phenomenon known as Howe duality. We recall the relevant statement below. 

\begin{proposition}
[Howe duality \cite{goodman2009symmetry,goodman2004multiplicity}]
Let \(Mp(n,\mathbb{R})\) denote the metaplectic group, and consider the oscillator, or Segal--Shale--Weil, representation realized on the Hilbert space \(\mathbb{H}^2(M_{n\times k})\), on which the reductive dual pair
\[
    Mp(n,\mathbb{R})\times O(k)
\]
acts unitarily. Let \(\Sigma\) be an index set parametrizing irreducible representations of \(O(k)\). For each \(\sigma\in\Sigma\), denote by
\[
    E^{\tau(\sigma)+\delta}
\]
the corresponding irreducible unitary representation of \(Mp(n,\mathbb{R})\) obtained from the Howe (theta) correspondence, and denote by \(\mathbb{F}^{\sigma}\) the irreducible representation space of \(O(k)\) of type \(\sigma\).

Then the oscillator representation admits the multiplicity-free decomposition
\[
    \mathbb{H}^2(M_{n\times k})
    =
    \bigoplus_{\sigma\in\Sigma}
    E^{\tau(\sigma)+\delta}\otimes \mathbb{F}^{\sigma}.
\]
\end{proposition}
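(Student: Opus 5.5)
The plan follows Howe's original argument in the Fock model, since this model is the one already used in the paper. First realize $\mathbb{H}^2(M_{n\times k})$ as the Bargmann--Fock space of $nk$ modes $a_{i\alpha}$, with $i=1,\ldots,n$ the mode index and $\alpha=1,\ldots,k$ the copy index. The metaplectic Lie algebra $\mathfrak{sp}(2n,\mathbb{R})$ acts through the quadratic operators
\[
E_{ij}=\sum_{\alpha}a^\dagger_{i\alpha}a^\dagger_{j\alpha},\qquad
F_{ij}=\sum_{\alpha}a_{i\alpha}a_{j\alpha},\qquad
H_{ij}=\sum_{\alpha}a^\dagger_{i\alpha}a_{j\alpha}+\tfrac{k}{2}\delta_{ij}.
\]
The group $O(k)$ acts by $\Gamma(O)$ on the copy index. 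The $n$-mode analogue of Theorem~\ref{th:com_sym} follows verbatim: each generator is a sum over $\alpha$ of a copy-independent quadratic form, and the relation $\sum_\alpha O_{\beta\alpha}O_{\gamma\alpha}=\delta_{\beta\gamma}$ makes it invariant. This shows that the two actions commute. The scalar $\tfrac{k}{2}$ (the Weyl-ordering term noted earlier) is the origin of the shift $\delta$ and of the need for the double cover.

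The key step is the converse. I will show that the algebra generated by these quadratic operators is exactly the algebra $\mathcal{W}^{O(k)}$ of $O(k)$-invariants in the Weyl algebra of polynomial differential operators on $M_{n\times k}$. Passing to the associated graded algebra reduces this to the First Fundamental Theorem of invariant theory for $O(k)$. That theorem says the $O(k)$-invariant polynomials in the $2n$ vectors $(z_{i\cdot},\partial_{i\cdot})\in\mathbb{C}^k$ are generated by the pairwise dot products, and these are precisely the symbols of $E_{ij},F_{ij},H_{ij}$. A filtration-degree induction then lifts the statement from symbols to operators.

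With this in hand, I apply the abstract duality theorem (Goodman--Wallach) to the space $\mathcal{P}$ of polynomials, i.e.\ the finite-particle vectors. Suppose a group $G$ acts locally finitely and semisimply on a space $V$, and an algebra $\mathcal{R}$ equals the full algebra of $G$-invariant operators in an algebra $\mathcal{A}$ that acts irreducibly on $V$. Then $V=\bigoplus_\sigma E^\sigma\otimes F^\sigma$, where the $E^\sigma$ are irreducible, pairwise inequivalent $\mathcal{R}$-modules. Here $\mathcal{A}$ is the Weyl algebra, which acts irreducibly on $\mathcal{P}$ by the Stone--von Neumann uniqueness theorem; this gives multiplicity freeness. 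To identify $E^\sigma$, I use the harmonics $\mathcal{H}=\bigcap_{i,j}\ker F_{ij}$. The classical $GL(n)\times O(k)$ duality on harmonics gives $\mathcal{H}=\bigoplus_\sigma \tau(\sigma)\otimes \mathbb{F}^\sigma$, and $\mathcal{P}=\mathbb{C}[E_{ij}]\cdot\mathcal{H}$. It follows that $E^\sigma$ is the lowest-weight module generated by the $U(n)$-type $\tau(\sigma)$, twisted by $\det^{k/2}$; this twist is the shift $\delta$.

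The main obstacle is the final passage from this algebraic decomposition of $(\mathfrak{sp},\widetilde U(n))$-modules to the unitary Hilbert-space statement. I would handle it as follows. Each $\widetilde U(n)\times O(k)$-isotypic piece, with fixed total photon number, is finite dimensional and mutually orthogonal. Hence the Hilbert-space closure of $E^\sigma\otimes\mathbb{F}^\sigma$ is an orthogonal summand. Because $E^\sigma$ is a unitarizable lowest-weight module, a standard Harish-Chandra argument integrates it to an irreducible unitary representation of $Mp(n,\mathbb{R})$. The remaining subtle point is showing that these closures exhaust $\mathbb{H}^2$, which follows from the density of polynomials in Fock space.
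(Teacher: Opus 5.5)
The paper contains no proof of this proposition. It imports the result from Goodman--Wallach and Goodman and uses it only as a consistency remark. The one ingredient the paper proves itself is the commutation of $\Gamma(O)$ with $\Lambda(S)^{\otimes k}$ (Theorem~\ref{th:com_sym}, and Lemma~\ref{th:multi_symplectic_commutant} for several modes), which is your first paragraph.

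The rest of your outline is the standard argument of the cited sources, and it is sound:
\begin{itemize}
\item identify the commutant as $\mathcal{W}^{O(k)}$ via the associated graded algebra and the First Fundamental Theorem;
\item apply the abstract duality theorem on the finite-particle space $\mathcal{P}$;
\item use harmonics and $\mathcal{P}=\mathbb{C}[E_{ij}]\cdot\mathcal{H}$ to recognize $E^\sigma$ as the irreducible lowest-weight module with lowest $\widetilde U(n)$-type $\tau(\sigma)\otimes\det^{k/2}$;
\item complete to the Hilbert space.
\end{itemize}
What this extra work buys is multiplicity-freeness and the identification of the $Mp$-factors. The paper never needs these: its witnesses rely only on commutation.

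A few points should be tightened.

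First, irreducibility of the Weyl algebra on $\mathcal{P}$ is elementary algebra. Apply derivatives to reduce any nonzero polynomial to a nonzero constant, then multiply by monomials. Stone--von Neumann is a statement about unitary representations and is not the right justification for algebraic irreducibility on $\mathcal{P}$.

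Second, your argument yields a sum over the $O(k)$-types that actually occur in $\mathcal{P}(M_{n\times k})$, not over all of $\widehat{O(k)}$ as the printed statement literally suggests. For instance, the determinant character of $O(k)$ occurs only when $n\ge k$. So $\Sigma$ must be read as this subset, and that corrected version is what you prove.

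Third, you do not need to integrate $E^\sigma$ abstractly and then match the result to something inside $\mathbb{H}^2$.
\begin{itemize}
\item The isotypic projector $P_\sigma=\dim\mathbb{F}^\sigma\int_{O(k)}\overline{\chi_\sigma(g)}\,\Gamma(g)\,dg$ commutes with the oscillator representation, so $\overline{P_\sigma\mathcal{P}}$ is $Mp$-invariant. This $Mp$-invariance is the point your sketch leaves implicit.
\item These subspaces are orthogonal and exhaust $\mathbb{H}^2$ by density of $\mathcal{P}$.
\item Irreducibility of the $Mp$-action on the multiplicity space follows because its $\widetilde U(n)$-finite vectors are polynomials. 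The centre of $\widetilde U(n)$ acts through total photon number, and these polynomials form the irreducible $(\mathfrak{g},K)$-module $E^\sigma$.
\end{itemize}

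Likewise, the harmonic $GL(n)\times O(k)$ duality need not be imported as an input. The $\mathfrak{p}^-$-invariants of the irreducible graded module $E^\sigma$ must lie in its lowest degree and form an irreducible $\mathfrak{gl}(n)$-module, which already gives $\tau(\sigma)$.
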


Since the metaplectic group \(Mp(n,\mathbb{R})\) is a double cover of the symplectic group \(Sp(2n,\mathbb{R})\), any operator commuting with the metaplectic action also commutes with all lifted symplectic transformations. Consequently, the commutation relation we proved directly in the above is fully consistent with the general Howe duality picture. 

Next, we determine which orthogonal copy-mixing transformations commute with applying the same Gaussian unitary to each copy. Since a Gaussian unitary comprises a symplectic part and a displacement part, and we have already shown the symplectic part commutes with all such orthogonal transformations, it remains to analyze the constraint imposed by the Heisenberg (displacement) part. 

For the \(k\)-copy case, the identical displacement operator is
\[
    D_a(\alpha)^{\otimes k}
    =
    \exp\left(
        \alpha\sum_{i=1}^k a_i^\dagger
        -
        \overline{\alpha}\sum_{i=1}^k a_i
    \right).
\]
Notice that this operator depends only on the total sum $\sum_i a_i$ and its Hermitian conjugate. This suggests introducing a collective mode that captures this symmetric combination. To that end, define the unit vector
\[
    \mathbf{e}_1
    =
    \frac{1}{\sqrt{k}}
    \begin{pmatrix}
        1\\
        \vdots\\
        1
    \end{pmatrix},
\]
and we can extend it to an orthonormal basis
\[
    \{\mathbf{e}_1,\ldots,\mathbf{e}_k\}
\]
of \(\mathbb{R}^k\). With respect to this basis, we define new annihilation operators by
\[
\begin{aligned}
    c_1
    &=
    \mathbf{e}_1^T\mathbf{a}
    =
    \frac{1}{\sqrt{k}}\sum_{i=1}^k a_i,\\
    &\ \vdots\\
    c_k
    &=
    \mathbf{e}_k^T\mathbf{a}
    =
    \sum_{i=1}^k \mathbf{e}_{k,i}a_i.
\end{aligned}
\]
Let
\[
    C=(\mathbf{e}_1,\ldots,\mathbf{e}_k).
\]
Since the vectors \(\mathbf{e}_1,\ldots,\mathbf{e}_k\) form an orthonormal basis, \(C\) is an orthogonal matrix and hence also a unitary matrix. If we denote
\[
    \mathbf{c}
    =
    (c_1,\ldots,c_k)^T,
\]
then
\[
    \mathbf{c}
    =
    C^T\mathbf{a}.
\]
Therefore, \(\mathbf{c}\) is a new vector of annihilation operators obtained from \(\mathbf{a}\) via an orthogonal mode transformation.

\begin{lemma}\label{lem:dis_com}
Suppose \(O\in O(k)\). Then
\begin{equation}
    [\Gamma(O),D_a(\alpha)^{\otimes k}]=0,
    \qquad \forall \alpha\in\mathbb{C},
\end{equation}
if and only if
\begin{equation}
    O\mathbf{e}_1=\mathbf{e}_1 .
\end{equation}
\end{lemma}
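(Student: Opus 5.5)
Our plan is to conjugate the displacement generator by \(\Gamma(O)\), in the same way as in the proof of Theorem~\ref{th:com_sym}, and compare the result with the original generator. Write
\[
    D_a(\alpha)^{\otimes k}=\exp\big(\alpha\sqrt{k}\,c_1^\dagger-\overline{\alpha}\sqrt{k}\,c_1\big),
    \qquad
    c_1=\mathbf{e}_1^T\mathbf{a},
\]
so the generator is \(L_\alpha:=\sqrt{k}\,(\alpha\,\mathbf{e}_1^T\mathbf{a}^\dagger-\overline{\alpha}\,\mathbf{e}_1^T\mathbf{a})\). By Eq.~\eqref{eq:metapro}, and since \(O\) is real, we have \(\Gamma(O)\mathbf{a}\Gamma(O)^{-1}=O^T\mathbf{a}\) and \(\Gamma(O)\mathbf{a}^\dagger\Gamma(O)^{-1}=O^T\mathbf{a}^\dagger\). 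Hence
\[
    \Gamma(O)L_\alpha\Gamma(O)^{-1}
    =\sqrt{k}\,\big(\alpha\,(O\mathbf{e}_1)^T\mathbf{a}^\dagger-\overline{\alpha}\,(O\mathbf{e}_1)^T\mathbf{a}\big).
\]
Exponentiating, which is legitimate because conjugation by a unitary commutes with the exponential, gives
\[
    \Gamma(O)D_a(\alpha)^{\otimes k}\Gamma(O)^{-1}=D(\sqrt{k}\,\alpha\,O\mathbf{e}_1),
\]
where \(D(\boldsymbol{\alpha})\) is the multimode displacement operator with vector \(\boldsymbol{\alpha}=\sqrt{k}\,\alpha\,O\mathbf{e}_1\in\mathbb{C}^k\). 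Likewise \(D_a(\alpha)^{\otimes k}=D(\sqrt{k}\,\alpha\,\mathbf{e}_1)\).

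For sufficiency, assume \(O\mathbf{e}_1=\mathbf{e}_1\). Then the two displacement vectors coincide, so the conjugated operator equals \(D_a(\alpha)^{\otimes k}\) for every \(\alpha\), and the commutator vanishes.

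For necessity, assume the commutator vanishes for all \(\alpha\). Then \(D(\sqrt{k}\,\alpha\,O\mathbf{e}_1)=D(\sqrt{k}\,\alpha\,\mathbf{e}_1)\) for all \(\alpha\), and we must show that equal displacement operators have equal displacement vectors. The cleanest route is to use the action on annihilation operators,
\[
    D(\boldsymbol{\beta})^\dagger a_j D(\boldsymbol{\beta})=a_j+\beta_j,
\]
which follows from the canonical commutation relations \eqref{eq:commut_relat} by the BCH formula. Equal operators induce equal conjugations, so we get \(\beta_j=\beta'_j\) for every \(j\). Taking \(\alpha=1\) then yields \(O\mathbf{e}_1=\mathbf{e}_1\).

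An alternative for the necessity step is to differentiate at \(\alpha=0\). Commutation for all real \(t\alpha\) forces the generators to commute, \([\Gamma(O),L_\alpha]=0\), which gives \((O\mathbf{e}_1-\mathbf{e}_1)^T\mathbf{a}=0\) as an operator identity. Since \(a_1,\ldots,a_k\) are linearly independent (for instance, their commutators with the \(a_j^\dagger\) differ), this again forces \(O\mathbf{e}_1=\mathbf{e}_1\).

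The main obstacle is the injectivity step: we need equal displacement operators to imply equal vectors, not merely vectors equal up to a phase ambiguity. There is no genuine projective ambiguity here, because there is no central phase in \(D\) and both displacements have the same form. The linear-independence argument on the generators avoids the issue altogether, so we would present that version and keep the BCH computation as a check.
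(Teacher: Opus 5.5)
Your argument is correct and is essentially the paper's proof. Both rewrite $D_a(\alpha)^{\otimes k}$ through the collective mode $c_1=\mathbf e_1^T\mathbf a$, get sufficiency from $\Gamma(O)c_1\Gamma(O)^{-1}=(O\mathbf e_1)^T\mathbf a$, and get necessity by differentiating at zero and using the linear independence of $a_1,\ldots,a_k$, which is the version you say you would present.

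Your other necessity route, via $D(\boldsymbol\beta)^\dagger a_jD(\boldsymbol\beta)=a_j+\beta_j$, is worth keeping too. It shows that commutation with a single $D_a(\alpha)^{\otimes k}$ with $\alpha\neq0$ already forces $O\mathbf e_1=\mathbf e_1$. That is the form actually needed in Theorem~\ref{th:main}, which is stated for a fixed $\alpha\neq0$, and in Lemma~\ref{lem:multi_displacement_commutant}.
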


\begin{proof}
   For the collective annihilation operator $c_1$, the corresponding displacement operator is
\begin{equation}
\begin{aligned}
    D_c(z)
    &=
    \exp\left(zc_1^\dagger-\overline{z}c_1\right)  \\
    &=
    \exp\left(
        \frac{z}{\sqrt{k}}\sum_{i=1}^k a_i^\dagger
        -
        \frac{\overline{z}}{\sqrt{k}}\sum_{i=1}^k a_i
    \right)  \\
    &=
    D_a\left(\frac{z}{\sqrt{k}}\right)^{\otimes k}.
\end{aligned}
\end{equation}
 Therefore,
\begin{equation}
    [\Gamma(O),D_a(\alpha)^{\otimes k}]=0,
    \qquad \forall \alpha\in\mathbb{C},
\end{equation}
is equivalent to
\begin{equation}\label{eq:C}
    [\Gamma(O),D_c(z)]=0,
    \qquad \forall z\in\mathbb{C}.
\end{equation}

We first prove the necessity. If \(\Gamma(O)\) commutes with \(D_c(z)\) for all \(z\in\mathbb{C}\), then
\begin{equation}
    \Gamma(O)D_c(z)\Gamma(O)^{-1}=D_c(z),
    \qquad \forall z\in\mathbb{C}.
\end{equation}
Differentiating this identity at \(z=0\) gives
\begin{equation}
    \Gamma(O)c_1\Gamma(O)^{-1}=c_1,
    \qquad
    \Gamma(O)c_1^\dagger\Gamma(O)^{-1}=c_1^\dagger .
\end{equation}

On the other hand, by Eq.~\eqref{eq:metapro},
\begin{equation}
    \Gamma(O)a_i\Gamma(O)^{-1}
    =
    \sum_{j=1}^k O_{ji}a_j .
\end{equation}
Hence, we have
\begin{equation}
\begin{aligned}
    \Gamma(O)c_1\Gamma(O)^{-1}
    &=
    \frac{1}{\sqrt{k}}
    \sum_{i=1}^k
    \Gamma(O)a_i\Gamma(O)^{-1}  \\
    &=
    \frac{1}{\sqrt{k}}
    \sum_{i=1}^k
    \sum_{j=1}^k
    O_{ji}a_j  \\
    &=
    \frac{1}{\sqrt{k}}
    \sum_{j=1}^k
    \left(
        \sum_{i=1}^k O_{ji}
    \right)a_j .
\end{aligned}
\end{equation}
Combining this with
\begin{equation}
    \Gamma(O)c_1\Gamma(O)^{-1}=c_1
    =
    \frac{1}{\sqrt{k}}\sum_{j=1}^k a_j,
\end{equation}
and using the linear independence of the annihilation operators \(a_1,\ldots,a_k\), we obtain
\begin{equation}
    \sum_{i=1}^k O_{ji}=1,
    \qquad j=1,\ldots,k.
\end{equation}
Equivalently,
\begin{equation}
    O
    \frac{1}{\sqrt{k}}
    \begin{pmatrix}
        1\\
        \vdots\\
        1
    \end{pmatrix}
    =
    \frac{1}{\sqrt{k}}
    \begin{pmatrix}
        1\\
        \vdots\\
        1
    \end{pmatrix}.
\end{equation}
This is to say that $O\mathbf{ e}_1=\mathbf{ e}_1$.

Conversely, suppose $ O\mathbf{e}_1=\mathbf{e}_1 $. Then
\begin{equation}
\begin{aligned}
    \Gamma(O)c_1\Gamma(O)^{-1}
    &=
    \Gamma(O)\mathbf{e}_1^T\mathbf{a}\Gamma(O)^{-1}  \\
    &=
    \mathbf{e}_1^T O^T\mathbf{a}  \\
    &=
    (O\mathbf{e}_1)^T\mathbf{a}  \\
    &=
    \mathbf{e}_1^T\mathbf{a}
    =
    c_1 .
\end{aligned}
\end{equation}
Similarly,
\begin{equation}
    \Gamma(O)c_1^\dagger\Gamma(O)^{-1}=c_1^\dagger .
\end{equation}
Therefore,
\begin{equation}
    \Gamma(O)D_c(z)\Gamma(O)^{-1}=D_c(z),
    \qquad \forall z\in\mathbb{C},
\end{equation}
and this indicates Eq. \eqref{eq:C}. This proves the equivalence.
\end{proof}

Then we can obtain the main theorem.

\begin{theorem}\label{th:main}
Assume \(\alpha\neq 0\). Among the orthogonal copy-mixing operators \(\Gamma(O)\), the operators commuting with $(D_\alpha S_\zeta)^{\otimes k}$ are exactly those satisfying
\begin{equation}\label{eq:thcon}
    O\in O(k),
    \qquad
    O\mathbf{e}_1=\mathbf{e}_1 .
\end{equation}
Moreover, this commuting subgroup is isomorphic to the orthogonal group \(O(k-1)\).
\end{theorem}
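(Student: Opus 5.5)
The plan is to reduce the statement to Theorem~\ref{th:com_sym} and to the displacement analysis behind Lemma~\ref{lem:dis_com}. The one new ingredient is that the hypothesis involves a single fixed $\alpha\neq 0$, not all $\alpha$. Write $U=(D_\alpha S_\zeta)^{\otimes k}=D_a(\alpha)^{\otimes k}\,\Lambda(S)^{\otimes k}$, where $S$ is the symplectic matrix of $S_\zeta$.

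\emph{Sufficiency.} Suppose $O\mathbf{e}_1=\mathbf{e}_1$. Theorem~\ref{th:com_sym} gives $[\Gamma(O),\Lambda(S)^{\otimes k}]=0$. Lemma~\ref{lem:dis_com} gives $[\Gamma(O),D_a(\alpha)^{\otimes k}]=0$. Hence $\Gamma(O)$ commutes with the product $U$.

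\emph{Necessity.} Suppose $[\Gamma(O),U]=0$. Since $\Gamma(O)$ commutes with $\Lambda(S)^{\otimes k}$ by Theorem~\ref{th:com_sym}, it also commutes with
\[
D_a(\alpha)^{\otimes k}=U\big(\Lambda(S)^{\otimes k}\big)^{-1}.
\]
Lemma~\ref{lem:dis_com} cannot be quoted directly, because it assumes commutation for every $\alpha$. I will therefore argue directly for the single value $\alpha$.

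1. Conjugate the generator. Using $\Gamma(O)\mathbf{a}\Gamma(O)^{-1}=O^T\mathbf{a}$, we get
\[
\Gamma(O)D_c(z)\Gamma(O)^{-1}=\exp\!\big(z\,c_f^\dagger-\overline{z}\,c_f\big),
\qquad c_f=(O\mathbf{e}_1)^T\mathbf{a},\qquad z=\sqrt{k}\,\alpha .
\]
This is the displacement along the unit vector $f=O\mathbf{e}_1$.

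2. Compare by their action on modes. A displacement $\exp(z\,c_g^\dagger-\overline{z}\,c_g)$ along a unit vector $g$ acts as $a_j\mapsto a_j-z\,g_j$. If the two displacement operators are equal, then their adjoint actions on every $a_j$ agree. This gives $z(\mathbf{e}_1-f)=0$.

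3. Conclude. Since $\alpha\neq0$, we have $z\neq 0$, and therefore $O\mathbf{e}_1=\mathbf{e}_1$.

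This step, handling a single $\alpha$ rather than all $\alpha$, is the main obstacle. Its key point is that equality of unitaries forces equality of the induced affine maps on $\mathbf{a}$. There is also a phase or sign ambiguity in $\Gamma(O)$ for $O\notin SO(k)$, but it does not affect the adjoint action, so I would remark on it and not dwell on it.

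\emph{Isomorphism with $O(k-1)$.} Take the orthogonal matrix $C=(\mathbf{e}_1,\ldots,\mathbf{e}_k)$. If $O\mathbf{e}_1=\mathbf{e}_1$, then $O^T\mathbf{e}_1=\mathbf{e}_1$ as well, so $O$ preserves $\mathbf{e}_1^\perp$. Consequently
\[
C^TOC=\begin{pmatrix}1&0\\0&O'\end{pmatrix},\qquad O'\in O(k-1).
\]
The map $O\mapsto O'$ is a group homomorphism. It is injective, and its inverse is $O'\mapsto C\,(1\oplus O')\,C^T$. This identifies the stabilizer with $O(k-1)$, and it is the same subgroup as the one acting on the modes $c_2,\ldots,c_k$.
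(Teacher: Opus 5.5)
Your proof is correct and uses the same decomposition as the paper. Theorem~\ref{th:com_sym} removes the squeezing factor. The displacement factor cuts $O(k)$ down to the stabilizer of $\mathbf{e}_1$. The block form $C^TOC=1\oplus O'$ identifies that stabilizer with $O(k-1)$.

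The one real difference is in the necessity direction. The paper simply quotes Lemma~\ref{lem:dis_com}. That lemma's necessity proof differentiates $\Gamma(O)D_c(z)\Gamma(O)^{-1}=D_c(z)$ at $z=0$, so it only applies when commutation holds for \emph{all} $\alpha$, whereas the theorem fixes a single $\alpha\neq0$. You instead conjugate to the displacement along $f=O\mathbf{e}_1$ and compare the induced shifts $a_j\mapsto a_j-\sqrt{k}\,\alpha f_j$ and $a_j\mapsto a_j-\sqrt{k}\,\alpha(\mathbf{e}_1)_j$. This works for the one given $\alpha$, so it proves the statement as written and closes that small gap in the paper's reasoning.

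Your remark about a sign ambiguity in $\Gamma(O)$ is unnecessary. The normally ordered $\Gamma$ is single-valued on $U(k)$, and in any case an overall phase cannot change whether a commutator vanishes.
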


\begin{proof}

From Theorem \ref{th:com_sym}, for any \(O\in O(k)\), we have $  [\Gamma(O),S_\zeta^{\otimes k}]=0.$ Therefore, the condition
\begin{equation}
    [\Gamma(O),(D_\alpha S_\zeta)^{\otimes k}]=0
\end{equation}
is determined by the displacement part. More precisely, it is equivalent to \begin{equation}
    [\Gamma(O),D_\alpha^{\otimes k}]=0.
\end{equation}
By Lemma \ref{lem:dis_com}, this holds for all \(\alpha\in\mathbb{C}\) if and only if $ O\mathbf{e}_1=\mathbf{e}_1.$

It remains to prove that this commuting subgroup is isomorphic to \(O(k-1)\). Extend \(\mathbf{e}_1\) to an orthonormal basis $\{\mathbf{e}_1,\ldots,\mathbf{e}_k\}$ of \(\mathbb{R}^k\), and form the orthogonal matrix $ C=(\mathbf{e}_1,\ldots,\mathbf{e}_k).$

If \(O\in O(k)\) satisfies \(O\mathbf{e}_1=\mathbf{e}_1\), then it also satisfies \(O^T\mathbf{e}_1=\mathbf{e}_1\) since \(O\) is orthogonal. Therefore, in the basis given by \(C\), the matrix \(O\) has the block diagonal form
\begin{equation}
    C^T O C
    =
    \begin{pmatrix}
        1&0\\
        0&\widetilde{O}
    \end{pmatrix},
\end{equation} 
where \(\widetilde{O}\in O(k-1)\). Conversely, every matrix of this block form defines an orthogonal matrix fixing \(\mathbf{e}_1\). Thus,
\begin{equation}
    \{O\in O(k):O\mathbf{e}_1=\mathbf{e}_1\}
    \cong O(k-1).
\end{equation}
\end{proof}

\subsection{Detection of single-mode non-Gaussian states}
For a single-mode Gaussian state
\begin{equation}
    \rho_G    = D_{\alpha}S_{\zeta}\rho_{\lambda}S_{\zeta}^{\dagger}D_{\alpha}^{\dagger},
\end{equation}
we consider the \(k\)-copy expectation value of an orthogonal transformation \(\Gamma(O)\), where \(O\in O(k)\) satisfy condition Eq. \eqref{eq:thcon}. Using the commutation relations obtained above, we have
\begin{equation}
\begin{aligned}
    \operatorname{Tr}\!\left(\rho_G^{\otimes k}\Gamma(O)\right)
    &=
    \operatorname{Tr}\!\left(
    D_{\alpha}^{\otimes k}S_{\zeta}^{\otimes k}
    \rho_{\lambda}^{\otimes k}
    S_{\zeta}^{\dagger\otimes k}
    D_{\alpha}^{\dagger\otimes k}
    \Gamma(O)
    \right)                                                   \\
    &=
    \operatorname{Tr}\!\left(
    S_{\zeta}^{\otimes k}
    \rho_{\lambda}^{\otimes k}
    S_{\zeta}^{\dagger\otimes k}
    D_{\alpha}^{\dagger\otimes k}
    \Gamma(O)
    D_{\alpha}^{\otimes k}
    \right)                                                   \\
    &=
    \operatorname{Tr}\!\left(
    S_{\zeta}^{\otimes k}
    \rho_{\lambda}^{\otimes k}
    S_{\zeta}^{\dagger\otimes k}
    \Gamma(O)
    \right)                                                   \\
    &=
    \operatorname{Tr}\!\left(
   \rho_{\lambda}^{\otimes k}\Gamma(O)
    \right).
\end{aligned}
\end{equation}
Therefore, for Gaussian states, this expectation value depends only on the thermal parameter \(\lambda\). We denote it by
\begin{equation}
    f_k(\lambda,O)
    =
    \operatorname{Tr}\!\left(
    \rho_{\lambda}^{\otimes k}\Gamma(O)
    \right).
\end{equation}

As the first example, consider \(k=2\). In this case, the group is isomorphic to \(O(1)\), which is discrete. Hence it has no nontrivial Lie algebra generator. However, there is one nontrivial group element, which can be chosen as the swap matrix
\begin{equation}
    \mathsf{SWAP}
    =
    \begin{pmatrix}
        0&1\\
        1&0
    \end{pmatrix}.
\end{equation}
One directly checks that  
\begin{equation}
    \mathsf{SWAP}(1,1)^T=(1,1)^T.
\end{equation}
The corresponding operator \(\Gamma(\mathsf{SWAP})\) is precisely the swap operator on two copies. Consequently, 
\begin{equation}
\begin{aligned}
    \operatorname{Tr}\!\left(\rho_G^{\otimes 2}\Gamma(\mathsf{SWAP})\right)
    &=
    \operatorname{Tr}(\rho_G ^2)                                      \\
    &=
    \operatorname{Tr}(\rho_{\lambda}^2)                             \\
    &=
    \frac{1-\lambda}{1+\lambda}.
\end{aligned}
\end{equation}
Thus, for a Gaussian state, the purity is determined by \(\lambda\). Conversely, if the purity is denoted by $ p=\operatorname{Tr}(\rho^2)$, then the corresponding Gaussian thermal parameter is
\begin{equation}\label{eq:pur_ther}
    \lambda
    =
    f_2^{-1}(p)
    =
    \frac{1-p}{1+p}.
\end{equation}
In particular, when \(\lambda=0\), the Gaussian state is pure and the purity
equals \(1\), as expected.

Next, we construct a family of orthogonal transformations which can be used as detectors of non-Gaussianity. 
\begin{lemma}
For the group
\begin{equation}
    \tilde{G}=\{O\in O(k):O\mathbf{e}_1=\mathbf{e}_1\},
\end{equation}
with \(k\geq 3\), a convenient basis of its Lie algebra can be chosen as follows. For \(1\leq i<j\leq k-1\), define
\begin{equation}\label{eq:Gij}
    G_{ij}
    =
    E_{ij}-E_{ji}
    +
    E_{ki}-E_{ik}
    +
    E_{jk}-E_{kj},
\end{equation}
where \(E_{ab}\) is the matrix with a \(1\) in position \((a,b)\) and zeros elsewhere. Then the set
\begin{equation}
    \{G_{ij}:1\leq i<j\leq k-1\}
\end{equation}
forms a basis of the Lie algebra of \( \tilde{G}\).
\end{lemma}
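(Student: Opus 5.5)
The plan is to identify the Lie algebra of $\tilde G$ explicitly, check that each $G_{ij}$ lies in it, and then compare linear independence against dimension.

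\textbf{Step 1: the Lie algebra of $\tilde G$.} I would show that
\[
\mathrm{Lie}(\tilde G)=\{X\in\mathfrak{o}(k): X\mathbf e_1=0\}.
\]
For one inclusion, take a smooth curve $O(t)\in\tilde G$ with $O(0)=I$. Differentiating $O(t)\mathbf e_1=\mathbf e_1$ at $t=0$ gives $X\mathbf e_1=0$, where $X=O'(0)$ is skew by \eqref{eq:ortho_algebra}. For the other inclusion, if $X^T=-X$ and $X\mathbf e_1=0$, then $e^{tX}\in O(k)$ and $e^{tX}\mathbf e_1=\mathbf e_1$ for all $t$. Hence $X\in\mathrm{Lie}(\tilde G)$.

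\textbf{Step 2: the dimension.} Conjugating by the orthogonal matrix $C=(\mathbf e_1,\ldots,\mathbf e_k)$ from the proof of Theorem~\ref{th:main}, such an $X$ becomes $\mathrm{diag}(0,\widetilde X)$ with $\widetilde X\in\mathfrak o(k-1)$. This is the infinitesimal version of the isomorphism $\tilde G\cong O(k-1)$. Therefore
\[
\dim\mathrm{Lie}(\tilde G)=\frac{(k-1)(k-2)}{2},
\]
which equals the number of pairs $1\le i<j\le k-1$.

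\textbf{Step 3: each $G_{ij}$ lies in the algebra.} Each $G_{ij}$ is a sum of terms $E_{ab}-E_{ba}$, so it is skew-symmetric. It remains to check $G_{ij}(1,\ldots,1)^T=0$, since this vector is proportional to $\mathbf e_1$. Only rows $i$, $j$ and $k$ are nonzero, and each has exactly one $+1$ and one $-1$:
\begin{itemize}
\item row $i$ has $E_{ij}$ and $-E_{ik}$;
\item row $j$ has $-E_{ji}$ and $E_{jk}$;
\item row $k$ has $E_{ki}$ and $-E_{kj}$.
\end{itemize}
Hence every row sum vanishes, and $G_{ij}\mathbf e_1=0$.

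\textbf{Step 4: linear independence.} The entry of $G_{ij}$ at position $(i,j)$, with $i<j\le k-1$, equals $1$. Every other $G_{i'j'}$ has a zero there, because its entries inside the upper-left $(k-1)\times(k-1)$ block occupy only the positions $(i',j')$ and $(j',i')$. Reading off these coordinates, any vanishing linear combination must have all coefficients equal to zero. Since the number of $G_{ij}$ matches the dimension from Step 2, they form a basis.

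The only step that needs care is Step 1, namely the correct identification of the Lie algebra together with its dimension count. The remaining checks are direct entry-by-entry verifications.
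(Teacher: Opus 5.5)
Your proposal is correct and follows the paper's proof essentially step for step. You show that each $G_{ij}$ is skew-symmetric with vanishing row sums, so it lies in the Lie algebra; you count $\binom{k-1}{2}$ elements against the dimension of $\mathfrak{so}(k-1)$; and you get linear independence from the distinguished $(i,j)$ entry. Your Step 1, which identifies the Lie algebra as the skew matrices annihilating $\mathbf{e}_1$ and computes its dimension by conjugating with $C$, is a slightly more explicit version of what the paper takes directly from $\tilde G\cong O(k-1)$, but it is the same argument.
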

\begin{proof}
   Since \(E_{ab}^T=E_{ba}\), we have 
   \begin{equation}
        G^T_{ij}=E_{ji}-E_{ij}+E_{ik}-E_{ki}+E_{kj}-E_{jk}=-G_{ij},
    \end{equation}
which implies that \(G_{ij}\in\mathfrak{so}(k)\) and consequently $e^{\theta G_{ij} }$ is orthogonal. 

Now we check the action on \(\mathbf{e}_1\). Since $\mathbf{e}_1=\frac{1}{\sqrt{k}}(1,\ldots,1)^T,$
\begin{equation}
        G_{ij}\mathbf{e}_1 =(E_{ij}-E_{ik})\mathbf{e}_1+(E_{jk}-E_{ji})\mathbf{e}_1+(E_{ki}-E_{kj})\mathbf{e}_1=0.
    \end{equation}
    Therefore,
\begin{equation}
    \begin{aligned}
       e^{\theta G_{ij} }\mathbf{e}_1&=I \mathbf{e}_1+\sum_{n=1}^{\infty} \frac{(\theta G_{ij})^n}{n!}\mathbf{e}_1\\
       &=\mathbf{e}_1.
    \end{aligned}
\end{equation}
which implies that each \(G_{ij}\) lies in the Lie algebra of \( \tilde{G}\).

It remains to show that these matrices form a basis. The number of matrices is
\begin{equation}
    \binom{k-1}{2}
    =
    \frac{(k-1)(k-2)}{2},
\end{equation}
which is the dimension of \(\mathfrak{so}(k-1)\). Since \( \tilde{G}\cong O(k-1)\), this is also the dimension of the Lie algebra of \( \tilde{G}\). Moreover, the matrices \(G_{ij}\) are linearly independent. Indeed, for each pair $(i,j)$ with \(1\leq i<j\leq k-1\), the $(i,j)$-entry of $G_{ij}$ is $1$ while for any other $G_{i'j'}$ with $(i,j)\neq (i',j')$, the $(i,j)$-entry is $0$. Each $G_{ij}$ possesses a unique nonzero matrix element not shared by any other matrix in the set, which guarantees linear independence. Thus, the set \(\{G_{ij}\}\) is a basis of the Lie algebra of \( \tilde{G}\).
\end{proof}

We now design a protocol to detect non-Gaussianity. For an unknown single-mode state \(\rho\), first measure its purity $p=\operatorname{Tr}(\rho^2),$ and determine the corresponding Gaussian thermal parameter $\lambda$ from Eq. \eqref{eq:pur_ther}.Then choose \(k\geq3\) and an orthogonal matrix \(O\in \tilde{G}\). For instance, one can take
\begin{equation}
    O=e^{\theta G_{ij}},
\end{equation}
where \(G_{ij}\) is one of the generators constructed above. If
\begin{equation}
    \operatorname{Tr}\!\left(\rho^{\otimes k}\Gamma(O)\right)
    \neq
    \operatorname{Tr}\!\left(\rho_{\lambda}^{\otimes k}\Gamma(O)\right),
\end{equation}
then \(\rho\) cannot be Gaussian. Thus the orthogonal expectation value provides a witness of non-Gaussianity.

To make the criterion explicit, we now compute the Gaussian reference value $\operatorname{Tr}\!\left(\rho_{\lambda}^{\otimes k}\Gamma(O)\right)$.

For a pure Gaussian state with \(\lambda=0\), since \(\Gamma(O)\) preserves the total photon number and leaves the vacuum invariant, we have
\begin{equation}
    \bra{0}^{\otimes k}\Gamma(O)\ket{0}^{\otimes k}
    =
    \braket{0^{\otimes k}|0^{\otimes k}}
    =
    1.
\end{equation}

For a general mixed Gaussian state, the thermal reference state is
\begin{equation}
    \rho_{\lambda}
    =
    \frac{\lambda^{a^\dagger a}}
    {\operatorname{Tr}(\lambda^{a^\dagger a})}
    =
    \frac{e^{\beta \hat{n}}}
    {\operatorname{Tr}(e^{\beta \hat{n}})},
\end{equation}
where \(\hat{n}=a^\dagger a\) and $\beta=\ln\lambda<0$. Thus,
\begin{equation}
    \rho_{\lambda}^{\otimes k}
    =
    \frac{e^{\beta\sum_{i=1}^k \hat{n}_i}}
    {\operatorname{Tr}\left(e^{\beta\sum_{i=1}^k \hat{n}_i}\right)}.
\end{equation}
The desired expectation value becomes
\begin{equation}
\begin{aligned}
    W(\lambda,O)
    &:=
    \operatorname{Tr}\!\left(
    \rho_{\lambda}^{\otimes k}\Gamma(O)
    \right)                                                     \\
    &=
    \frac{
    \operatorname{Tr}\!\left(
    e^{\beta\sum_{i=1}^k \hat{n}_i}\Gamma(O)
    \right)}
    {\operatorname{Tr}\!\left(
    e^{\beta\sum_{i=1}^k \hat{n}_i}
    \right)} .
\end{aligned}
\end{equation}
A direct calculation gives 
\begin{equation}
    \operatorname{Tr}\!\left(
    e^{\beta\sum_{i=1}^k \hat{n}_i}
    \right)
    =
    (1-\lambda)^{-k},
\end{equation}
and it remains to evaluate the numerator. Using the second-quantized representation, 
\begin{equation}
    \begin{aligned}
        \operatorname{Tr}(  e^{\beta \sum_{i=1}^k \hat{n}_i  }\Gamma(O)) &=\operatorname{Tr}[\Gamma(e^{\beta I_k})\Gamma(O)]\\
        &=\operatorname{Tr}(\Gamma(e^{\beta}O))\\
        &=\operatorname{Tr}(\Gamma(V)\Gamma(e^{\beta}D)\Gamma(V^{-1}))\\
        &=\operatorname{Tr}(\Gamma(\lambda D)).
    \end{aligned}
\end{equation}
where we use the spectral decomposition $O=VDV^{-1}$ with $V$ orthogonal and $D={\rm diag}(d_1,\cdots, d_k)$ being a diagonal matrix and each $d_i=e^{i\phi_i}$.

For a Fock state $\ket{n_1,\cdots,n_k}$, 
\begin{equation}
    \Gamma(\lambda D) \ket{n_1,\cdots,n_k}=\exp\Big( \sum_{i=1}^k \ln (\lambda d_i) N_i \Big)\ket{n_1,\cdots,n_k}=\prod_{i=1}^k(\lambda d_i)^{n_i}\ket{n_1,\cdots,n_k}.
\end{equation}
Consequently,
\begin{equation}
\begin{aligned}
    \operatorname{Tr}\!\left[\Gamma(\lambda D)\right]
    &=
    \sum_{n_1,\ldots,n_k=0}^{\infty}
    \prod_{i=1}^k(\lambda d_i)^{n_i}                         \\
    &=
    \prod_{i=1}^k
    \left(
    \sum_{n_i=0}^{\infty}
    (\lambda d_i)^{n_i}
    \right)                                                   \\
    &=
    \prod_{i=1}^k
    \frac{1}{1-\lambda d_i}                                  \\
    &=
    \frac{1}{\det(I_k-\lambda D)}                            \\
    &=
    \frac{1}{\det(I_k-\lambda O)}.
\end{aligned}
\end{equation}

Here, we use the facts \(0\leq\lambda<1\) and \(|d_i|=1\), so the series is convergent. Therefore, the Gaussian reference value is
\begin{equation}\label{eq:reference}
   W(\lambda,O)=\frac{(1-\lambda)^{k}}{\det (I_k-\lambda O)}
\end{equation}

As a result, for every single-mode Gaussian state \(\rho_G\) with thermal parameter \(\lambda\) and for every \(O\in  \tilde{G}\), 
\begin{equation}
    \operatorname{Tr}\!\left(\rho_G^{\otimes k}\Gamma(O)\right)
    =
    \frac{(1-\lambda)^k}
    {\det(I_k-\lambda O)}.
\end{equation}
If an unknown state \(\rho\) violates this equality for some \(k\) and some \(O\in  \tilde{G}\), then \(\rho\) is non-Gaussian.


\subsection{Illustrated examples}

\textit{Example 1}. Fock state \(\ket{n}\) with \(n\geq 1\). We show that it is non-Gaussian by using the \(k=3\) orthogonal detector constructed above. In this case, the commuting subgroup is isomorphic to \(O(2)\), and its connected component has a single Lie algebra generator
\begin{equation}\label{eq:G}
    G=
    \begin{pmatrix}
        0&-1&1\\
        1&0&-1\\
        -1&1&0
    \end{pmatrix}.
\end{equation}
The corresponding second‑quantized operator is 
\begin{equation}
    A=\Gamma'(G)
    =
    a_1 a_2^\dagger-a_1^\dagger a_2
    +a_3 a_1^\dagger-a_1 a_3^\dagger
    +a_2 a_3^\dagger-a_2^\dagger a_3.
\end{equation}

Since \(G^T=-G\), we have \(e^{\theta G}\in O(3)\), and the corresponding second-quantized operator is
\begin{equation}\label{eq:ethetaG}
    \Gamma(e^{\theta G})=\exp(\theta A).
\end{equation}

\textbf{Method 1: Non‑constant expectation value. }
Consider the matrix element 
\[
     F_n(\theta)
    =\bra{nnn}\Gamma(e^{\theta G})\ket{nnn}=
    \bra{nnn}e^{\theta A}\ket{nnn}.
\]
For any pure Gaussian state \(\ket{\varphi}=D_\alpha S_\zeta\ket{0}\), our previous result implies
\begin{equation}
    \bra{\varphi}^{\otimes 3}
    \Gamma(e^{\theta G})
    \ket{\varphi}^{\otimes 3}
    =
    \bra{000}\Gamma(e^{\theta G})\ket{000}
    =
    1.
\end{equation}
So if $\ket{n}$ were Gaussian, then $F_n(\theta)$ would be identically $1$. We show that this is not the case for $n\geq 1$. 

The derivatives of $F_n(\theta)$ are
\begin{equation}
    \frac{d^m}{d\theta^m}F_n(\theta)
    =
    \bra{nnn}A^m e^{\theta A}\ket{nnn}.
\end{equation}
At \(\theta=0\), the first derivative is
\begin{equation}
    F_n'(0)
    =
    \bra{nnn}A\ket{nnn}
    =
    0.
\end{equation}
For the second derivative, since \(A^\dagger=-A\), we obtain
\begin{equation}
\begin{aligned}
    F_n''(0)
    &=
    \bra{nnn}A^2\ket{nnn}        \\
    &=
    -\bra{nnn}A^\dagger A\ket{nnn} \\
    &=
    -6n(n+1).
\end{aligned}
\end{equation}
Therefore, \(F_n''(0)\neq0\) for \(n\geq1\), and \(F_n(\theta)\) is not a constant function. This proves that \(\ket{n}\) is not a pure Gaussian state.

\textbf{Method 2: Direct evaluation of the matrix element. }
We now show directly that for a suitably chosen \(O\in \tilde{G}\), one has  $\bra{nnn}\Gamma(O)\ket{nnn}\neq 1$. Choose
\[
    O=e^{\theta G},
    \qquad
    \theta=\frac{\pi}{2\sqrt{3}},
\]
where $G$ is the generator in Eq. \eqref{eq:G}. To prove that the expectation value deviates from $1$, it suffices to demonstrate that $ \Gamma(O)\ket{nnn}$ has a non‑zero component orthogonal to $\ket{nnn}$. Consider the specific overlap
\begin{equation}
    C_n
    =
    \bra{3n,0,0}\Gamma(O)\ket{n,n,n}.
\end{equation}
Using the transformation rule
\begin{equation}
    \Gamma(O)a_i^\dagger\Gamma(O)^{-1}
    =
    \sum_{j=1}^3 O_{ji}a_j^\dagger,
\end{equation}
a direct calculation yields 
\begin{equation}
\begin{aligned}
    C_n
    &=
    \frac{\sqrt{(3n)!}}{(n!)^{3/2}}
    (O_{11}O_{12}O_{13})^n  \\
    &=
    \frac{\sqrt{(3n)!}}{(n!)^{3/2}}
    \left(-\frac{2}{27}\right)^n .
\end{aligned}
\end{equation}
 For \(n\geq1\), \(C_n\neq0\). Hence, \(\Gamma(O)\ket{n,n,n}\) is not proportional to \(\ket{n,n,n}\), which implies \(\bra{nnn}\Gamma(O)\ket{nnn}\neq1\). This provides an alternative proof that the Fock state is non‑Gaussian.

\textbf{Explicit evaluation for general $n$. }
One can also compute the matrix element directly for arbitrary $n$. Since \(\Gamma(e^{\theta G})\) preserves the total photon number, the matrix element vanishes unless \(m=n\). Therefore it suffices to evaluate
\[
    \bra{nnn}\Gamma(O)\ket{nnn}.
\]
The Fock state can be generated from the vacuum by
\begin{equation}
    \ket{nnn}
    =
    \prod_{i=1}^3
    \frac{(a_i^\dagger)^n}{\sqrt{n!}}
    \ket{0}.
\end{equation}
Then
\begin{equation}
\begin{aligned}
    \bra{nnn}\Gamma(O)\ket{nnn}
    =
    \frac{1}{(n!)^3}
    \bra{0}
    \prod_{i=1}^3 a_i^n
    \prod_{j=1}^3
    \left(
        \sum_{r=1}^3 O_{rj}a_r^\dagger
    \right)^n
    \ket{0}.
\end{aligned}
\end{equation}
Let
\begin{equation}
    J=
    (\overbrace{1,\ldots,1}^{n},
    \overbrace{2,\ldots,2}^{n},
    \overbrace{3,\ldots,3}^{n})
\end{equation}
be a vector of length \(N=3n\). Expanding the product gives
\begin{equation}
\begin{aligned}
    \prod_{j=1}^3
    \left(
        \sum_{r=1}^3 O_{rj}a_r^\dagger
    \right)^n
    =
    \sum_{r_1,\ldots,r_N\in\{1,2,3\}}
    \left(
        \prod_{\beta=1}^N O_{r_\beta J_\beta}
    \right)
    a_{r_1}^\dagger\cdots a_{r_N}^\dagger.
\end{aligned}
\end{equation}
Substituting into the expectation value, 
\begin{equation}
\begin{aligned}
    \bra{nnn}\Gamma(O)\ket{nnn}
    =
    \frac{1}{(n!)^3}
    \sum_{r_1,\ldots,r_N\in\{1,2,3\}}
    \left(
        \prod_{\beta=1}^N O_{r_\beta J_\beta}
    \right)
    \bra{0}
    a_1^n a_2^n a_3^n
    a_{r_1}^\dagger\cdots a_{r_N}^\dagger
    \ket{0}.
\end{aligned}
\end{equation}
The vacuum expectation value is nonzero only if the sequence \((r_1,\ldots,r_N)\) contains exactly \(n\) entries equal to \(1\), \(n\) entries equal to \(2\), and \(n\) entries equal to \(3\). In that case,
\begin{equation}
    \bra{0}
    a_1^n a_2^n a_3^n
    a_{r_1}^\dagger\cdots a_{r_N}^\dagger
    \ket{0}
    =
    (n!)^3.
\end{equation}
Hence,
\begin{equation}
\begin{aligned}
    \bra{nnn}\Gamma(O)\ket{nnn}
    &=
    \sum_{\substack{r_1,\ldots,r_N\in\{1,2,3\}\\
    n_1=n_2=n_3=n}}
    \prod_{\beta=1}^N O_{r_\beta J_\beta}        \\
    &=
    \frac{\operatorname{Per}(O[nnn|nnn])}{(n!)^3}.
\end{aligned}
\end{equation}
Here \(n_i=\#\{\beta:r_\beta=i\}\), and \(\operatorname{Per}(O[nnn|nnn])\) is the permanent of the matrix
\begin{equation}
    O[nnn|nnn]
    =
    \begin{pmatrix}
        O_{11}J_n& O_{12}J_n& O_{13}J_n\\
        O_{21}J_n& O_{22}J_n& O_{23}J_n\\
        O_{31}J_n& O_{32}J_n& O_{33}J_n
    \end{pmatrix},
\end{equation}
where \(J_n\) is the \(n\times n\) all-ones matrix.

Consider the particular Fock state when \(n=1\), this reduces to \(O[111|111]=O\). Using the explicit form of $  O=e^{\theta G},$ we have
\begin{equation}\label{eq:O3}
    O=
    \begin{pmatrix}
        x&y&z\\
        z&x&y\\
        y&z&x
    \end{pmatrix},
\end{equation}
where
\begin{equation}
\begin{aligned}
    x&=\frac{1+2\cos(\sqrt{3}\theta)}{3},\\
    y&=\frac{1-\cos(\sqrt{3}\theta)}{3}
    -\frac{\sin(\sqrt{3}\theta)}{\sqrt{3}},\\
    z&=\frac{1-\cos(\sqrt{3}\theta)}{3}
    +\frac{\sin(\sqrt{3}\theta)}{\sqrt{3}}.
\end{aligned}
\end{equation}
For the special case $\theta=\frac{\pi}{2\sqrt{3}},$ there is
\[
    x=\frac13,
    \qquad
    y=\frac{1-\sqrt{3}}{3},
    \qquad
    z=\frac{1+\sqrt{3}}{3}.
\]
Consequently,
\begin{equation}
    \bra{111}\Gamma(O)\ket{111}
    =
    \operatorname{Per}(O)
    =
    \frac{5}{9}\neq 1.
\end{equation}

\textit{Example 2. Cat state.} The even cat state is the superposition of two coherent states:
\begin{equation}
    \ket{C_+}
    =
    \mathcal{N}_+
    \left(
        \ket{\alpha}+\ket{-\alpha}
    \right),
\end{equation}
where the normalization factor is
\begin{equation}
    \mathcal{N}_+
    =
    \frac{1}{\sqrt{2(1+e^{-2|\alpha|^2})}}.
\end{equation}

Again take \(k=3\), and use the orthogonal operator of the form in Eq.~\eqref{eq:O3}. For any pure Gaussian state, the corresponding expectation value equals \(1\). Therefore, it is enough to show that the function
\[
    \bra{C_+}^{\otimes 3}
    e^{\theta A}
    \ket{C_+}^{\otimes 3}
\]
is not constant in \(\theta\).

The first derivative at \(\theta=0\) vanishes by symmetry. We therefore calculate the second derivative.

Introduce the odd cat state
\begin{equation}
    \ket{C_-}
    =
    \mathcal{N}_-
    \left(
        \ket{\alpha}-\ket{-\alpha}
    \right),
    \qquad
    \mathcal{N}_-
    =
    \frac{1}{\sqrt{2(1-e^{-2|\alpha|^2})}}.
\end{equation}
A straightforward calculation gives 
\begin{equation}
\begin{aligned}
    a\ket{C_+}
   =
    \alpha\frac{\mathcal{N}_+}{\mathcal{N}_-}
    \ket{C_-}                             =
    \alpha\sqrt{\tanh |\alpha|^2}\ket{C_-}.
\end{aligned}
\end{equation}
Define the unnormalized photon-added even cat state by
\begin{equation}
    \ket{\chi}=a^\dagger\ket{C_+}.
\end{equation}
Then using the expression of $A$ from Eq. \eqref{eq:ethetaG} (with three copies), we obtain 
\begin{equation}
\begin{aligned}
    A\ket{C_+}^{\otimes 3}
    &=
    \alpha\sqrt{\tanh|\alpha|^2}
    \Big(
    \ket{C_-}_1\ket{\chi}_2\ket{C_+}_3
    -
    \ket{\chi}_1\ket{C_-}_2\ket{C_+}_3     \\
    &\quad
    +
    \ket{\chi}_1\ket{C_+}_2\ket{C_-}_3
    -
    \ket{C_-}_1\ket{C_+}_2\ket{\chi}_3     \\
    &\quad
    +
    \ket{C_+}_1\ket{C_-}_2\ket{\chi}_3
    -
    \ket{C_+}_1\ket{\chi}_2\ket{C_-}_3
    \Big).
\end{aligned}
\end{equation}
Since \(A^\dagger=-A\), the second derivative at \(\theta=0\) is
\begin{equation}
\begin{aligned}
    \frac{d^2}{d\theta^2}
    \bra{C_+}^{\otimes 3}
    e^{\theta A}
    \ket{C_+}^{\otimes 3}
    \bigg|_{\theta=0}
    &=-    \bra{C_+}^{\otimes 3}
    A^\dagger A
    \ket{C_+}^{\otimes 3}\\
    &=-
    6|\alpha|^2\tanh|\alpha|^2
    \left(
        \braket{\chi|\chi}
        -
        |\braket{C_-|\chi}|^2
    \right).
\end{aligned}
\end{equation}
Now compute the inner products: 
\begin{equation}
\begin{aligned}
    \braket{\chi|\chi}
    &=
    \bra{C_+}aa^\dagger\ket{C_+}       \\
    &=
    1+\bra{C_+}a^\dagger a\ket{C_+}    \\
    &=
    1+|\alpha|^2\tanh|\alpha|^2.
\end{aligned}
\end{equation}
The nontrivial cross term is
\begin{equation}
\begin{aligned}
    \braket{C_-|\chi}
    &=
    \bra{C_-}a^\dagger\ket{C_+}       \\
    &=
    \overline{\alpha}\sqrt{\coth|\alpha|^2}.
\end{aligned}
\end{equation}
Hence, we have 
\begin{equation}
     \begin{aligned}
         \bra{C_+}^{\otimes 3}A^\dagger A \ket{C_+}^{\otimes 3}&= 6|\alpha|^2\tanh{|\alpha|^2}\Big(1+|\alpha|^2 \tanh{|\alpha|^2}-\frac{|\alpha|^2}{\tanh{|\alpha|^2}}\Big)\\
         &=6 |\alpha|^2(\tanh{|\alpha|^2}-|\alpha|^2 \sech^2 |\alpha|^2 ).
     \end{aligned}
\end{equation}
Therefore
\begin{equation}
\begin{aligned}
    \frac{d^2}{d\theta^2}
    \bra{C_+}^{\otimes 3}
    e^{\theta A}
    \ket{C_+}^{\otimes 3}
    \bigg|_{\theta=0}
    =
    -6|\alpha|^2
    \left(
        \tanh|\alpha|^2
        -
        |\alpha|^2\sech^2|\alpha|^2
    \right).
\end{aligned}
\end{equation}
For \(|\alpha|^2>0\), the term in parentheses is strictly positive (it vanishes only at $\alpha=0$). Hence the second derivative is nonzero whenever \(\alpha\neq0\). Thus, the even cat state is non-Gaussian, except for the trivial case \(\alpha=0\), where it reduces to the vacuum state.

\textit{Example 3. Mixed state of two coherent states.} Consider a mixed state
\begin{equation}
    \rho=\frac{1}{2}(\ket{\alpha}\bra{\alpha}+\ket{-\alpha}\bra{-\alpha}).
\end{equation}
Note that this is a statistical mixture, in contrast to the even cat state which is a coherent superposition of the same two components. 

 Let $r=|\alpha|^2$, the purity is
\begin{equation}
\begin{aligned}
    \operatorname{Tr}(\rho^2)
    &=
    \frac{1}{4}
    \left(
        2+2|\braket{\alpha|-\alpha}|^2
    \right)                                  \\
    &=
    \frac{1}{2}
    \left(
        1+e^{-4r}
    \right).
\end{aligned}
\end{equation}
Using the relation between purity and the thermal parameter for a Gaussian state, Eq.~\eqref{eq:pur_ther}, we have 
\begin{equation}
    \frac{1-\mu}{1+\mu}
    =
    \frac{1}{2}
    \left(
        1+e^{-4r}
    \right).
\end{equation}
Solving this equation gives
\begin{equation}
    \mu
    =
    \frac{1-e^{-4r}}{3+e^{-4r}}.
\end{equation}

For \(O=e^{\theta G}\) (with $G$ as in Eq. \eqref{eq:G}), the Gaussian reference value is given by Eq.~\eqref{eq:reference}. Since the eigenvalues of $O$ are $ 1,e^{i\sqrt{3}\theta}, e^{-i\sqrt{3}\theta}$, we obtain 
\begin{equation}\label{eq:W}
    W(\theta,r)
    =
    \frac{(1-\mu)^2}
    {1-2\mu\cos(\sqrt{3}\theta)+\mu^2}.
\end{equation}

Next we compute the expectation value for three copies of \(\rho\):
\begin{equation}
\begin{aligned}
    f(\theta,r)
    &:=
    \operatorname{Tr}
    \left(
        \rho^{\otimes 3}\Gamma(e^{\theta G})
    \right)                                  \\
    &=
    \frac{1}{8}
    \sum_{\epsilon_1,\epsilon_2,\epsilon_3=\pm1}
    \bra{\epsilon_1\alpha,\epsilon_2\alpha,\epsilon_3\alpha}
    \Gamma(e^{\theta G})
    \ket{\epsilon_1\alpha,\epsilon_2\alpha,\epsilon_3\alpha}.
\end{aligned}
\end{equation}

Let $ \epsilon=(\epsilon_1,\epsilon_2,\epsilon_3)^T.$ Using the overlap formula for coherent states, 
\begin{equation}
    \braket{\beta|\gamma}
    =
    \exp\left(
        -\frac{1}{2}\|\beta\|^2
        -
        \frac{1}{2}\|\gamma\|^2
        +
        \beta^\dagger\gamma
    \right),
\end{equation}
and the fact that $\Gamma(e^{\theta G})\ket{\epsilon\alpha}=   \ket{(e^{\theta G}\epsilon)\alpha}$ (since $\Gamma$ preserves vacuum and acts linearly on coherent-state parameters), we find 
\begin{equation}
\begin{aligned}
    \bra{\epsilon\alpha}
    \Gamma(e^{\theta G})
    \ket{\epsilon\alpha} 
    &=
    \exp\left(
        -3r+r\,\epsilon^T e^{\theta G}\epsilon
    \right).
\end{aligned}
\end{equation}

When
\[
    \epsilon=(1,1,1)^T
    \quad\text{or}\quad
    \epsilon=(-1,-1,-1)^T,
\]
one has $\epsilon^T e^{\theta G}\epsilon=3.$ For the remaining six choices of \(\epsilon\), 
\begin{equation}
    \epsilon^T e^{\theta G}\epsilon
    =
    \frac{1}{3}
    +
    \frac{8}{3}
    \cos(\sqrt{3}\theta).
\end{equation}
Therefore,
\begin{equation}\label{eq:f}
    f(\theta,r)
    =
    \frac{1}{4}
    \left(
        1+
        3\exp\left[
            -\frac{8r}{3}
            \left(
                1-\cos(\sqrt{3}\theta)
            \right)
        \right]
    \right).
\end{equation}
Comparing Eq.~\eqref{eq:W} with Eq.~\eqref{eq:f}, we see that the mixed state \(\rho\) does not reproduce the Gaussian reference function, except in the trivial case \(r=0\). Hence the state is non-Gaussian for \(\alpha\neq0\), which is consistent with previous result \cite{PRXQuantum.2.030204}.

 For illustration, one may fix $ \theta=\frac{\pi}{2\sqrt{3}},$ so that $\cos(\sqrt{3}\theta)=0.$ Then
\begin{equation}
    W(r)
    =
    \frac{(1-\mu)^2}{1+\mu^2},
    \qquad
    \mu=\frac{1-e^{-4r}}{3+e^{-4r}},
\end{equation}
whereas
\begin{equation}
    f(r)
    =
    \frac{1}{4}
    \left(
        1+3e^{-8r/3}
    \right).
\end{equation}
These two functions are plotted in Fig.~\ref{fig:1}, clearly showing their deviation.
  \begin{figure}\label{fig:1}
        \centering
        \includegraphics[width=0.5\linewidth]{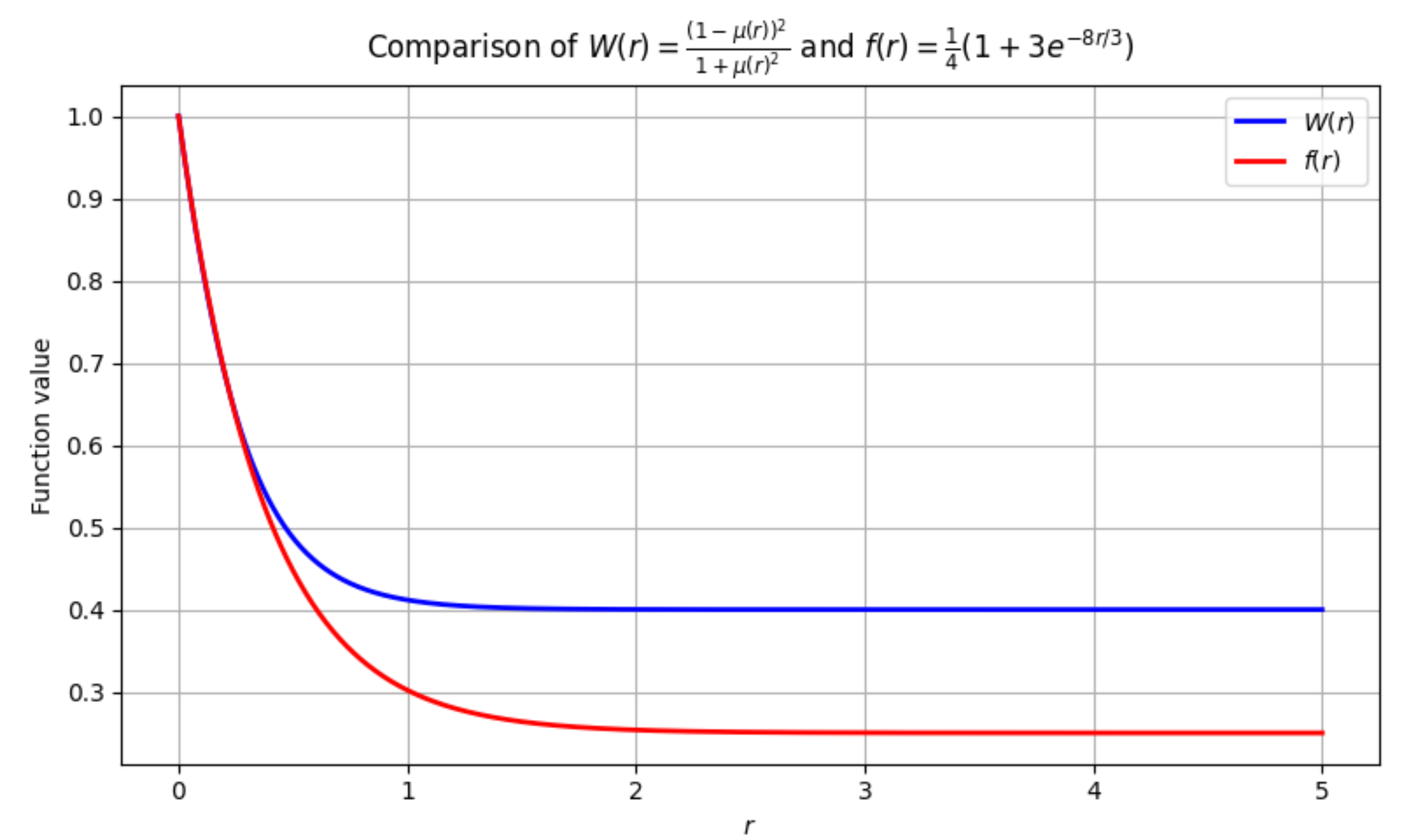}
    \end{figure}
 
\section{Measurement Scheme and Statistical Analysis }

In this section, we present an experimental protocol for measuring the expectation value of a passive linear‑optical  copy-mixing transformation.

\subsection{General Passive-Interferometric Measurement }

We first discuss how to measure the general multi‑copy quantity 
\begin{equation}
    z_O
    =
    \operatorname{Tr}\!\left[
        \Gamma(O)\rho^{\otimes k}
    \right],
\end{equation}
where \(O\in O(k)\), and in the connected component one may write \(O=e^A\) with \(A\in\mathfrak{so}(k)\).

Let
\begin{equation}
    O=V
    \operatorname{diag}(e^{i\phi_1},\ldots,e^{i\phi_k})V^{-1}, \phi_j \in\mathbb{R}
\end{equation}
be a spectral decomposition. Then the second-quantized operator factorizes as
\begin{equation}
    \Gamma(O)
    =
    \Gamma(V)
    \exp\!\left(
        i\sum_{j=1}^k \phi_j \hat n_j
    \right)
    \Gamma(V)^\dagger .
\end{equation}
Using the cyclicity of the trace, we obtain 
\begin{equation}
\begin{aligned}
    z_O
    &=
    \operatorname{Tr}\!\left[
        \Gamma(O)\rho^{\otimes k}
    \right] \\
    &=
    \operatorname{Tr}\!\left[
        \exp\!\left(
            i\sum_{j=1}^k \phi_j \hat n_j
        \right)
        \Gamma(V)^\dagger
        \rho^{\otimes k}
        \Gamma(V)
    \right],
\end{aligned}
\end{equation}
where $ \hat n_j=a_j^\dagger a_j$ is the photon-number operator of the \(j\)-th output mode. Hence \(z_O\) can be measured by first applying the interferometer \(\Gamma(V)^\dagger\), followed by photon-number-resolved detection \cite{PhysRevA.71.061803Rosenberg2005,Fukuda:112011}. For each experimental shot with photon-number outcome \((n_1,\ldots,n_k)\), one evaluates the bounded phase estimator
\begin{equation}
    X_O(n_1,\ldots,n_k)
    =
    \exp\!\left(
        i\sum_{j=1}^k \phi_j n_j
    \right).
\end{equation}
Averaging this quantity over many repetitions gives an unbiased estimator of \(z_O\).

\subsection{Three-Copy Case}

We now discuss how to measure the quantity for arbitrary $\theta$
\begin{equation}
    z(\theta)
    =
    \operatorname{Tr}\!\left[
        \Gamma(e^{\theta G})\rho^{\otimes 3}
    \right],
\end{equation}
which appears in the single-mode non-Gaussianity witness.

Recall that
\begin{equation}
    G=
    \begin{pmatrix}
        0&-1&1\\
        1&0&-1\\
        -1&1&0
    \end{pmatrix}.
\end{equation}
This generator is diagonalized by the discrete Fourier transform matrix
\begin{equation}
    F=
    \frac{1}{\sqrt{3}}
    \begin{pmatrix}
        1&1&1\\
        1&\omega&\omega^2\\
        1&\omega^2&\omega
    \end{pmatrix},
    \qquad
    \omega=e^{2\pi i/3}.
\end{equation}
More precisely,
\begin{equation}
    G
    =
    F
    \begin{pmatrix}
        0&0&0\\
        0&-i\sqrt{3}&0\\
        0&0&i\sqrt{3}
    \end{pmatrix}
    F^\dagger .
\end{equation}
Therefore,
\begin{equation}
    e^{\theta G}
    =
    F
    \begin{pmatrix}
        1&0&0\\
        0&e^{-i\sqrt{3}\theta}&0\\
        0&0&e^{i\sqrt{3}\theta}
    \end{pmatrix}
    F^\dagger .
\end{equation}
In the optical system, this gives
\begin{equation}
    \Gamma(e^{\theta G})
    =
    \Gamma(F)
    \exp\!\left[
        i\sqrt{3}\theta(\hat n_3-\hat n_2)
    \right]
    \Gamma(F)^\dagger ,
\end{equation}
where $\Gamma(F)$ can be realized by Fourier tritter \cite{PhysRevLett.73.58Reck1994,spagnolo2013three}. Equivalently, if one uses the inverse Fourier tritter instead, the sign is reversed and one may write the phase factor as
\[
\exp\!\left[
        i\sqrt{3}\theta(\hat n_2-\hat n_3)
    \right].
\]

By cyclicity of the trace,
\begin{equation}
\begin{aligned}
    z(\theta)
    =
    \operatorname{Tr}\!\left[
        e^{i\sqrt{3}\theta(\hat n_3-\hat n_2)}
        \Gamma(F)^\dagger
        \rho^{\otimes 3}
        \Gamma(F)
    \right].
\end{aligned}
\end{equation}
Thus the measurement of \(z(\theta)\) can be reduced to a Fourier tritter followed by photon-number-resolved detection.

The protocol is as follows.

\begin{itemize}
    \item Prepare three identical copies of the unknown state,
    \[
        \rho^{\otimes 3}
        =
        \rho_1\otimes\rho_2\otimes\rho_3.
    \]

    \item Independently estimate the purity
    \[
        p=\operatorname{Tr}(\rho^2)
    \]
    using the SWAP measurement,
    \[
        p=
        \operatorname{Tr}\!\left[
            \mathsf{SWAP}\,\rho^{\otimes 2}
        \right].
    \]
    From this purity, determine the thermal parameter of the Gaussian reference state,
    \begin{equation}
        \lambda=\frac{1-p}{1+p}.
    \end{equation}
 \item Apply the inverse Fourier tritter \(\Gamma(F)^\dagger\) to the three copies. The state before photon-number detection is    
\begin{equation}
        \rho_F
        =
        \Gamma(F)^\dagger
        \rho^{\otimes 3}
        \Gamma(F).
    \end{equation}
    \item Perform photon-number-resolved detection on the output modes. In each experimental shot, record the photon numbers \(n_2\) and \(n_3\).

    \item For each shot, compute the phase estimator    
\begin{equation}
        X(n_2,n_3)
        =
        \exp\!\left[
            i\sqrt{3}\theta(n_{3}-n_{2})
        \right].
    \end{equation}
    After \(N\) repetitions, estimate
    \begin{equation}
        \widehat{z}(\theta)
        =
        \frac{1}{N}
        \sum_{\ell=1}^{N}
        \exp\!\left[
            i\sqrt{3}\theta
            (n_{3,\ell}-n_{2,\ell})
        \right].
    \end{equation}
\end{itemize}

Since \(|X(n_2,n_3)|=1\), the estimator is bounded. In practice, one can estimate its real and imaginary parts separately:
\[
    \operatorname{Re}X
    =
    \cos\!\left[
        \sqrt{3}\theta(n_3-n_2)
    \right],
    \qquad
    \operatorname{Im}X
    =
    \sin\!\left[
        \sqrt{3}\theta(n_3-n_2)
    \right].
\]

For a Gaussian state with the same purity, the reference value is
\begin{equation}
    W(\lambda,\theta)
    =
    \frac{(1-\lambda)^2}
    {1-2\lambda\cos(\sqrt{3}\theta)+\lambda^2}.
\end{equation}
Therefore, if the experimentally measured value satisfies
\begin{equation}
    \widehat{z}(\theta)
    \neq
    W(\lambda,\theta)
\end{equation}
within statistical error, then the state \(\rho\) is certified to be non-Gaussian.

\subsubsection{Sample complexity}

First, consider the estimator
\begin{equation}
    \widehat z(\theta)
    =
    \frac{1}{N_{\rm shot}}
    \sum_{\ell=1}^{N_{\rm shot}}
    X_{\ell},
    \qquad
    X_{\ell}
    =
    \exp\!\left[
        i\sqrt{3}\theta
        (n_{3,\ell}-n_{2,\ell})
    \right],
\end{equation}
 and it's unbiased, namely, $\mathbb{E}\big[\widehat z(\theta)\big]=z(\theta).$
Since $|X_\ell|=1$, applying Hoeffding's inequality to the real and imaginary parts separately gives yields,  
\begin{equation}
    \Pr\left[
        |\widehat z(\theta)-z(\theta)|
        \geq
        \epsilon_z
    \right]
    \leq
    4\exp\left(
        -\frac{N_{\rm shot}\epsilon_z^2}{4}
    \right)
\end{equation}
for any \(\epsilon_z>0\). Therefore, to estimate \(z(\theta)\) within additive error \(\epsilon_z\) with probability at least \(1-\delta_z\), it is sufficient to take
\begin{equation}
    N_{\rm shot}
    \geq
    \frac{4}{\epsilon_z^2}
    \log\frac{4}{\delta_z}.
\end{equation}
This scaling is independent of the photon number distribution because the measured random variable \(X\) is a unit‑modulus phase factor. 

Next, we consider estimator for the purity and the corresponding thermal parameter $\lambda$. The purity is measured by a two-copy SWAP measurement. Let \(M_{\rm shot}\) be the number of SWAP measurement shots, and define 
\[
    \widehat p
    =
    \frac{1}{M_{\rm shot}}
    \sum_{\ell=1}^{M_{\rm shot}}Y_\ell,
\]
where \(Y_\ell=\pm1\) is the SWAP measurement outcome. This is because the SWAP operator is Hermitian and satisfies \(\mathsf{SWAP}^2=I\), so its eigenvalues are \(\pm 1\). The estimator is unbiased, i.e. $\mathbb{E}[\widehat p]=p.$ Using Hoeffding’s inequality for bounded random variables, we have
\begin{equation}
    \Pr\left[
        |\widehat p- p|
        \geq
        \epsilon_p
    \right]
    \leq
    2\exp\left(
       -\frac{M_{\rm shot}\epsilon_p^2}{2}
    \right),
\end{equation}
Hence, to achieve $|\widehat p-p|\leq\epsilon_p$ with probability at least $1-\delta_p$,  it is sufficient to take  
\begin{equation}
M_{\rm shot}
\geq
\frac{2}{\epsilon_p^2}
\log\frac{2}{\delta_p}.
\end{equation}

The thermal parameter \(\lambda\) is a function of $p$. For small errors, \(|\widehat p-p|\leq\epsilon_p\), one has
\begin{equation}
    |\widehat\lambda-\lambda|
    \leq
    \frac{2\epsilon_p}
    {(1+p)(1+\widehat p)}\leq    2\epsilon_p,
\end{equation}
provided \(\widehat p\geq0\), which is the typical regime for a sufficiently accurate purity estimate. Therefore, to ensure $ |\widehat\lambda-\lambda|\leq \epsilon_\lambda$ with high probability, it suffices to choose $\epsilon_p=\epsilon_\lambda/2$ in the sample‑size formula for $ M_{\rm shot}$.

Finally, we give the overall certification rule. The Gaussian reference value is
\begin{equation}
W(\lambda,\theta)
=
\frac{(1-\lambda)^2}
{1-2\lambda\cos(\sqrt{3}\theta)+\lambda^2}.
\end{equation}
Its uncertainty comes from the uncertainty of \(\lambda\). For small errors, the error propagation formula \cite{taylor1997error} gives  
\begin{equation}
    \Delta W    \approx  \left|
        \frac{\partial W}{\partial\lambda}    \right|    \Delta\lambda.
\end{equation}
A conservative bound is $\Delta W \leq L_\lambda \epsilon_\lambda$, where \(L_\lambda\) is an upper bound for \(|\partial W/\partial\lambda|\) on the confidence interval of \(\lambda\) and explicitly,
\begin{equation}
    \frac{\partial W}{\partial\lambda}
    =
    W(\lambda,\theta)
    \left[
        -\frac{2}{1-\lambda}
        +
        \frac{2(c_\theta-\lambda)}
        {D_\theta(\lambda)}
    \right],
\end{equation}
with
\begin{equation}
   c_\theta=\cos(\sqrt{3}\theta),
    \qquad
    D_\theta(\lambda)
    =
    1-2\lambda c_\theta+\lambda^2.
\end{equation}

Let \(\widehat z(\theta)\) be the estimate from the tritter measurement and \(\widehat\lambda\) from the SWAP measurement. If
\begin{equation}
    \left|
        \widehat z(\theta)
        -
        W(\widehat\lambda,\theta)
    \right|
    >
    \epsilon_z+L_\lambda \epsilon_\lambda,
\end{equation}
 the state is certified to be non-Gaussian with confidence at least \(1-\delta_z-\delta_p\). 

Both sample sizes scale as \(O(\epsilon^{-2})\) for fixed additive precision, and the tritter measurement enjoys a particularly simple statistical structure because each shot contributes only a unit‑modulus phase factor.

\section{Multi-Mode Generalization }

We now generalize the construction to a \(d\)-mode optical system. Consider \(k\) identical copies of a \(d\)-mode system. The annihilation operators are denoted by
\begin{equation}
    a_{r,\ell},
    \qquad
    r=1,\ldots,d,
    \quad
    \ell=1,\ldots,k,
\end{equation}
where \(r\) labels the physical mode and \(\ell\) labels the copy. For each copy, define the doubled operator vector
\begin{equation}
    \xi_\ell
    =
    \begin{pmatrix}
        a_{1,\ell}\\
        \vdots\\
        a_{d,\ell}\\
        a_{1,\ell}^{\dagger}\\
        \vdots\\
        a_{d,\ell}^{\dagger}
    \end{pmatrix}.
\end{equation}
For an orthogonal matrix \(O\in O(k)\), and for each physical mode \(r\), let \(\Gamma^{(r)}(O)\) denote the passive second-quantized operator that mixes only the \(k\) copies of the \(r\)-th mode. Equivalently, if \(O=e^A\) with \(A\in\mathfrak{so}(k)\), then \(\Gamma^{(r)}(O)\) is obtained from Eqs.~\eqref{eq:algrep} and \eqref{eq:alggro} by replacing the single-particle modes with the copy modes \(a_{r,1},\ldots,a_{r,k}\).

Since operators belonging to different physical modes commute, the copy-mixing operator acting on the full \(d\)-mode system is 
\begin{equation} \Gamma_d(O) = \prod_{r=1}^{d}\Gamma^{(r)}(O). \end{equation}
It acts identically on the copy index of each physical mode. More explicitly,
\begin{equation}\label{eq:multi-copy-action}
    \Gamma_d(O)a_{r,\ell}\Gamma_d(O)^{-1}
    =
    \sum_{\beta=1}^{k}O_{\beta\ell}a_{r,\beta},
\end{equation}
and
\begin{equation}
    \Gamma_d(O)a_{r,\ell}^{\dagger}\Gamma_d(O)^{-1}
    =
    \sum_{\beta=1}^{k}O_{\beta\ell}a_{r,\beta}^{\dagger}.
\end{equation}
Equivalently,
\begin{equation}
    \Gamma_d(O)\xi_\ell\Gamma_d(O)^{-1}
    =
    \sum_{\beta=1}^{k}O_{\beta\ell}\xi_\beta .
\end{equation}

\begin{lemma}\label{th:multi_symplectic_commutant}
For any $O\in O(k)$ and any $S\in Sp(2d,\mathbb{R})$, one has
\begin{equation}
    [\Gamma_d(O),\Lambda(S)^{\otimes k}]=0.
\end{equation}
\end{lemma}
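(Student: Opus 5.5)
The plan is to mirror the proof of Theorem~\ref{th:com_sym}, now with $2d$-component operator vectors per copy. First I write the Gaussian unitary as an exponential of a quadratic form. If $S=e^{iX}$, Eq.~\eqref{eq:symrep} gives $\Lambda(S)=\exp(-i\,\xi^\dagger h\,\xi)$ with $\xi=(a_1,\ldots,a_d,a_1^\dagger,\ldots,a_d^\dagger)^T$ and a fixed Hermitian $2d\times 2d$ matrix $h=-\tfrac12 K L_{(c)}XL_{(c)}^\dagger$.

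The same $h$ acts on every copy, and quadratic forms on different copies commute. Hence
\[
\Lambda(S)^{\otimes k}=e^{-iG},\qquad G=\sum_{\ell=1}^k \xi_\ell^\dagger h\,\xi_\ell .
\]
Next I conjugate $G$ by $\Gamma_d(O)$, using Eq.~\eqref{eq:multi-copy-action} and its adjoint. Since $O$ is real, both components transform the same way, $\Gamma_d(O)\xi_\ell\Gamma_d(O)^{-1}=\sum_\beta O_{\beta\ell}\xi_\beta$ and $\Gamma_d(O)\xi_\ell^\dagger\Gamma_d(O)^{-1}=\sum_\beta O_{\beta\ell}\xi_\beta^\dagger$. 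The copy-mixing matrix acts on the copy index $\ell$, while $h$ acts on the internal index (mode $r$ and the $a/a^\dagger$ label), so the two factor apart. Inserting $\Gamma_d(O)^{-1}\Gamma_d(O)$ between the factors of each monomial $\xi^\dagger_{\ell,i}h_{ij}\xi_{\ell,j}$ gives
\[
\Gamma_d(O)G\Gamma_d(O)^{-1}=\sum_{\beta,\gamma}\Big(\sum_\ell O_{\beta\ell}O_{\gamma\ell}\Big)\xi_\beta^\dagger h\,\xi_\gamma=\sum_\beta \xi_\beta^\dagger h\,\xi_\beta=G ,
\]
by $OO^T=I$. Exponentiating then gives $\Gamma_d(O)\Lambda(S)^{\otimes k}\Gamma_d(O)^{-1}=\Lambda(S)^{\otimes k}$.

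The main obstacle is that not every $S\in Sp(2d,\mathbb R)$ is a single exponential. I would handle this by writing $S$ as a finite product of exponentials. This is always possible, since $Sp(2d,\mathbb R)$ is connected and therefore generated by any neighbourhood of the identity, in particular by $\exp$ of its Lie algebra. The Bloch–Messiah decomposition gives an explicit version: $S=U_1 Z U_2$ with passive $U_1,U_2$ and single-mode squeezers $Z$, each factor being an exponential.

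The metaplectic representation is multiplicative up to a sign. So $\Lambda(S)^{\otimes k}$ equals, up to a global phase, a product of operators $\Lambda(S_m)^{\otimes k}$ with each $S_m$ an exponential. Each such factor commutes with $\Gamma_d(O)$ by the previous step, so the product does too, and the global phase does not affect the commutator.

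An alternative that avoids this issue is to work at the level of mode operators. $\Lambda(S)^{\otimes k}$ maps each $\xi_\ell$ to $S^{(c)}\xi_\ell$, and $\Gamma_d(O)$ maps $\xi_\ell$ to $\sum_\beta O_{\beta\ell}\xi_\beta$. These two linear actions commute on the generators. Both operators are unitaries in the metaplectic (Segal–Shale–Weil) representation of $Mp(2dk,\mathbb R)$, since the passive $\Gamma_d(O)$ is itself metaplectic. Therefore their group commutator induces the identity Bogoliubov map and is a scalar by irreducibility (Schur). That scalar is $\pm1$, since the commutator of the underlying symplectic matrices is trivial and the metaplectic representation is faithful on the double cover. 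I would try the product-of-exponentials route first, since it only uses results already stated in the paper.
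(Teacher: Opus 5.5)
Your argument is correct and is essentially the paper's proof: you write $\Lambda(S)^{\otimes k}=e^{-iG}$ with $G=\sum_{\ell}\xi_\ell^\dagger h\,\xi_\ell$, conjugate by $\Gamma_d(O)$ via Eq.~\eqref{eq:multi-copy-action}, use $OO^T=I$ to recover $G$, and exponentiate. Your extra step improves on the paper: factoring a general $S$ into exponentials (e.g.\ via Bloch--Messiah) and using that $\Lambda$ is multiplicative up to sign covers $S$ that are not single exponentials, which the paper's proof tacitly assumes away even though such $S$ exist in $Sp(2d,\mathbb{R})$ (for $d=1$, e.g.\ $S=\mathrm{diag}(-2,-1/2)$, whose trace is below $-2$); only your Schur-lemma alternative is incomplete, because knowing the scalar is $\pm1$ does not yet give commutation, and you would still need, e.g., connectedness of $Mp(2d,\mathbb{R})$ to fix it to $+1$.
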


\begin{proof}

It is enough to prove the result at the level of the quadratic generators. For the \(d\)-mode symplectic operator \(S\), choose a metaplectic lift and write it as
\begin{equation}
    \Lambda(S)=\exp\left(-i\xi^\dagger h\xi\right),
\end{equation}
where \(h\) is the corresponding quadratic generating matrix. For \(k\) copies, the generator of \(\Lambda(S)^{\otimes k}\) is
\begin{equation}
    G
    =
    \sum_{\ell=1}^{k}\xi_\ell^\dagger h\xi_\ell,
\end{equation}
and therefore
\begin{equation}
    \Lambda(S)^{\otimes k}=e^{-iG}.
\end{equation}
Using Eq.~\eqref{eq:multi-copy-action}, we obtain
\begin{equation}
\begin{aligned}
    \Gamma_d(O)G\Gamma_d(O)^{-1}
    &=
    \sum_{\ell=1}^{k}
    \Gamma_d(O)\xi_\ell^\dagger\Gamma_d(O)^{-1}
    h
    \Gamma_d(O)\xi_\ell\Gamma_d(O)^{-1}        \\
    &=
    \sum_{\ell=1}^{k}
    \left(
        \sum_{\beta=1}^{k}O_{\beta\ell}\xi_\beta^\dagger
    \right)
    h
    \left(
        \sum_{\gamma=1}^{k}O_{\gamma\ell}\xi_\gamma
    \right)                                   \\
    &=
    \sum_{\ell,\beta,\gamma=1}^{k}
    O_{\beta\ell}O_{\gamma\ell}
    \xi_\beta^\dagger h\xi_\gamma .
\end{aligned}
\end{equation}
Since $O$ is orthogonal,
\begin{equation}
    \sum_{\alpha=1}^{k}O_{\beta\alpha}O_{\gamma\alpha}
    =
    \delta_{\beta\gamma}.
\end{equation}
Hence,
\begin{equation}
    \Gamma_d(O)G\Gamma_d(O)^{-1}
    =
    \sum_{\beta=1}^{k}\xi_\beta^\dagger h\xi_\beta
    =
    G .
\end{equation}
Therefore,
\begin{equation}
    \Gamma_d(O)e^{-iG}\Gamma_d(O)^{-1}=e^{-iG},
\end{equation}
which gives
\begin{equation}
    [\Gamma_d(O),\Lambda(S)^{\otimes k}]=0.
\end{equation}
\end{proof}
For a $d$-mode displacement parameter
\begin{equation}
    \boldsymbol{\alpha}=(\alpha_1,\cdots,\alpha_d)\in\mathbb C^d,
\end{equation}
the displacement operator is
\begin{equation}
    D_{\boldsymbol{\alpha}}
    =
    \exp\left[
    \sum_{r=1}^{d}
    \left(
    \alpha_r a_r^\dagger-\bar{\alpha}_r a_r
    \right)
    \right].
\end{equation}
For $k$ copies,
\begin{equation}
    D_{\boldsymbol{\alpha}}^{\otimes k}
    =
    \exp\left[
    \sum_{\ell=1}^{k}
    \sum_{r=1}^{d}
    \left(
    \alpha_r a_{r,\ell}^{\dagger}
    -
    \bar{\alpha}_r a_{r,\ell}
    \right)
    \right].
\end{equation}

\begin{lemma}\label{lem:multi_displacement_commutant}
Suppose $O\in O(k)$ and $\boldsymbol{\alpha}\neq 0$. Then
\begin{equation}
    [\Gamma_d(O),D_{\boldsymbol{\alpha}}^{\otimes k}]=0
\end{equation}
if and only if
\begin{equation}
    O\mathbf e_1=\mathbf e_1.
\end{equation}
\end{lemma}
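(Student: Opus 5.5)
Fix $\boldsymbol{\alpha}\neq 0$ and argue through the Lie-algebra generator of the displacement. Write $D_{\boldsymbol{\alpha}}^{\otimes k}=e^{B}$ with
\[
B=\sum_{r=1}^{d}\bigl(\alpha_r b_r^\dagger-\bar\alpha_r b_r\bigr),\qquad b_r=\sum_{\ell=1}^{k}a_{r,\ell}=\sqrt{k}\,\mathbf e_1^T\mathbf a_r ,
\]
where $\mathbf a_r=(a_{r,1},\ldots,a_{r,k})^T$. In words, $b_r$ is $\sqrt{k}$ times the collective mode $c_{1}^{(r)}$ of the $r$-th physical mode, exactly as in the proof of Lemma~\ref{lem:dis_com}.

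\textbf{Sufficiency.} Suppose $O\mathbf e_1=\mathbf e_1$. By Eq.~\eqref{eq:multi-copy-action},
\[
\Gamma_d(O)\,\mathbf e_1^T\mathbf a_r\,\Gamma_d(O)^{-1}=\mathbf e_1^TO^T\mathbf a_r=(O\mathbf e_1)^T\mathbf a_r=\mathbf e_1^T\mathbf a_r
\]
for every $r$, and the same holds for the adjoints. Hence $\Gamma_d(O)B\Gamma_d(O)^{-1}=B$. Exponentiating gives $\Gamma_d(O)e^{B}\Gamma_d(O)^{-1}=e^{B}$, so $[\Gamma_d(O),D_{\boldsymbol{\alpha}}^{\otimes k}]=0$.

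\textbf{Necessity.} This is the delicate step, because the hypothesis concerns a single fixed $\boldsymbol{\alpha}$, whereas in Lemma~\ref{lem:dis_com} we could differentiate in $z$. I would instead compare conjugated mode operators. Suppose $\Gamma_d(O)$ commutes with $e^{B}$. Since
\[
e^{-B}a_{r,\ell}e^{B}=a_{r,\ell}+\alpha_r ,
\]
conjugating this relation by $\Gamma_d(O)$ and applying Eq.~\eqref{eq:multi-copy-action} to both sides gives
\[
\sum_\beta O_{\beta\ell}\,(a_{r,\beta}+\alpha_r)=\sum_\beta O_{\beta\ell}\,a_{r,\beta}+\alpha_r .
\]
Concretely: conjugating $a_{r,\ell}$ by $\Gamma_d(O)$ yields $\sum_\beta O_{\beta\ell}a_{r,\beta}$; conjugating that by $e^{-B}$ adds $\alpha_r\sum_\beta O_{\beta\ell}$; and the commutation means the order of the two conjugations does not matter. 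The identity therefore reduces to
\[
\alpha_r\sum_{\beta}O_{\beta\ell}=\alpha_r\qquad\text{for all } r,\ell .
\]
Choose $r$ with $\alpha_r\neq0$, which exists since $\boldsymbol{\alpha}\neq 0$. Then every column sum of $O$ equals $1$, i.e. $O^T\mathbf 1=\mathbf 1$. Because $O$ is orthogonal, $O^T\mathbf e_1=\mathbf e_1$ implies $\mathbf e_1=O\mathbf e_1$.

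The main obstacle is making the ordering of conjugations rigorous. The cleanest route is to note that commutation of the unitaries implies $\Gamma_d(O)^{-1}e^{-B}\Gamma_d(O)=e^{-B}$, and to apply both sides as adjoint actions to $a_{r,\ell}$ on the finite-particle domain. One must also keep the index conventions consistent: the paper's Eq.~\eqref{eq:multi-copy-action} uses columns, so the argument yields $O^T\mathbf e_1=\mathbf e_1$, and orthogonality is what converts this into $O\mathbf e_1=\mathbf e_1$.
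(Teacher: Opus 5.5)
Your proof is correct and takes essentially the paper's route. The paper conjugates the exponent of $D_{\boldsymbol{\alpha}}^{\otimes k}$ by $\Gamma_d(O)$ using Eq.~\eqref{eq:multi-copy-action}, and reads off that the result is again $D_{\boldsymbol{\alpha}}^{\otimes k}$ exactly when every row sum of $O$ equals $1$. Your necessity step instead compares the two orders of conjugation on $a_{r,\ell}$, which gives column sums that orthogonality converts to $O\mathbf e_1=\mathbf e_1$; this also justifies what the paper leaves implicit, namely that equal displacement operators must have equal amplitude vectors, so equal exponentials force equal exponents.
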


\begin{proof}
By Eq. \eqref{eq:multi-copy-action}, one has
\begin{equation}
\begin{aligned}
    &\Gamma_d(O)
    D_{\boldsymbol{\alpha}}^{\otimes k}
    \Gamma_d(O)^{-1} \\
    &=
    \exp\left[
    \sum_{\ell=1}^{k}
    \sum_{\beta=1}^{k}
    \sum_{r=1}^{d}
    O_{\beta\ell}
    \left(
    \alpha_r a_{r,\beta}^{\dagger}
    -
    \bar{\alpha}_r a_{r,\beta}
    \right)
    \right] \\
    &=
    \exp\left[
    \sum_{\beta=1}^{k}
    \left(\sum_{\ell=1}^{k}O_{\beta\ell}\right)
    \sum_{r=1}^{d}
    \left(
    \alpha_r a_{r,\beta}^{\dagger}
    -
    \bar{\alpha}_r a_{r,\beta}
    \right)
    \right].
\end{aligned}
\end{equation}
This equals $D_{\boldsymbol{\alpha}}^{\otimes k}$ if and only if
\begin{equation}
    \sum_{\ell=1}^{k}O_{\beta\ell}=1,
    \qquad \beta=1,\cdots,k.
\end{equation}
Equivalently,
\begin{equation}
    O
    \begin{pmatrix}
        1\\
        \vdots\\
        1
    \end{pmatrix}
    =
    \begin{pmatrix}
        1\\
        \vdots\\
        1
    \end{pmatrix}.
\end{equation}
Thus
\begin{equation}
    O\mathbf e_1=\mathbf e_1.
\end{equation}
\end{proof}
\begin{theorem}\label{th:multi_jacobi_commutant}
Let
\begin{equation}
    U=D_{\boldsymbol{\alpha}}\Lambda(S)
\end{equation}
be a \(d\)-mode Gaussian unitary, where \(S\in Sp(2d,\mathbb R)\) and \(\boldsymbol{\alpha}\neq0\). Within the orthogonal family \(\{\Gamma_d(O):O\in O(k)\}\), the operators commuting with \(U^{\otimes k}\) are exactly those satisfying
\begin{equation}\label{eq:multi_thcon}
    O\in O(k),
    \qquad
    O\mathbf e_1=\mathbf e_1.
\end{equation}
Moreover, this group is isomorphic to \(O(k-1)\).
\end{theorem}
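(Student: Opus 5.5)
The proof mirrors Theorem \ref{th:main}, using the two multi-mode lemmas in place of Theorem \ref{th:com_sym} and Lemma \ref{lem:dis_com}. First note that $U^{\otimes k}=D_{\boldsymbol\alpha}^{\otimes k}\Lambda(S)^{\otimes k}$, since the tensor power of a product is the product of tensor powers. By Lemma \ref{th:multi_symplectic_commutant}, $\Gamma_d(O)$ commutes with $\Lambda(S)^{\otimes k}$ for every $O\in O(k)$. Then
\begin{equation}
\Gamma_d(O)U^{\otimes k}\Gamma_d(O)^{-1}=\bigl(\Gamma_d(O)D_{\boldsymbol\alpha}^{\otimes k}\Gamma_d(O)^{-1}\bigr)\Lambda(S)^{\otimes k}.
\end{equation}
Because $\Lambda(S)^{\otimes k}$ is invertible, this equals $U^{\otimes k}$ if and only if $\Gamma_d(O)D_{\boldsymbol\alpha}^{\otimes k}\Gamma_d(O)^{-1}=D_{\boldsymbol\alpha}^{\otimes k}$. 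So the commutation condition for $U^{\otimes k}$ reduces exactly to the commutation condition for the displacement part.

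Next I will invoke Lemma \ref{lem:multi_displacement_commutant}, which gives the equivalence with $O\mathbf e_1=\mathbf e_1$ for the fixed $\boldsymbol\alpha\neq0$. This is the one step that needs care. The lemma's proof compares exponents, so I will justify that step for a single nonzero $\boldsymbol\alpha$ (unlike Lemma \ref{lem:dis_com}, which assumed commutation for all $\alpha$).

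Set $v=O\mathbf 1$ and $B_\beta=\sum_r(\alpha_r a_{r,\beta}^\dagger-\bar\alpha_r a_{r,\beta})$. The conjugated operator is then $\exp(\sum_\beta v_\beta B_\beta)$, a displacement operator with amplitude $v_\beta\alpha_r$ on mode $(r,\beta)$. Two displacement operators are equal only if their amplitudes coincide: compare their conjugation action on $a_{r,\beta}$, which shifts it by the amplitude, or compare Weyl symbols. Since some $\alpha_r\neq0$, this forces $v_\beta=1$ for all $\beta$, that is, $O\mathbf e_1=\mathbf e_1$. The converse is immediate, because then $\sum_\ell O_{\beta\ell}=1$ and the exponent is unchanged.

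The isomorphism with $O(k-1)$ is argued exactly as in the proof of Theorem \ref{th:main}. For orthogonal $O$, the condition $O\mathbf e_1=\mathbf e_1$ implies $O^T\mathbf e_1=\mathbf e_1$, so $O$ preserves $\mathbf e_1^\perp$. Conjugating by $C=(\mathbf e_1,\dots,\mathbf e_k)$ gives the block form $\mathrm{diag}(1,\widetilde O)$ with $\widetilde O\in O(k-1)$. The map $O\mapsto\widetilde O$ is a bijective group homomorphism: multiplicativity follows from the block structure, and any $\widetilde O\in O(k-1)$ arises by reversing the block construction. The only nonroutine point is the injectivity of $\boldsymbol\alpha\mapsto D_{\boldsymbol\alpha}$ used above, which follows from $D_{\boldsymbol\alpha}^\dagger a_{r,\beta}D_{\boldsymbol\alpha}=a_{r,\beta}+\alpha_{r,\beta}$.
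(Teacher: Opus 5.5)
Your proof is correct and follows the paper's route. You reduce to the displacement part via Lemma~\ref{th:multi_symplectic_commutant}, apply Lemma~\ref{lem:multi_displacement_commutant}, and reuse the block-diagonalization from Theorem~\ref{th:main} for the isomorphism with $O(k-1)$; your injectivity argument for $\boldsymbol\alpha\mapsto D_{\boldsymbol\alpha}$ (via the shift $a_{r,\beta}\mapsto a_{r,\beta}+v_\beta\alpha_r$) also justifies the exponent comparison for a single fixed $\boldsymbol\alpha\neq0$, which the paper's lemma asserts without comment.
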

\begin{proof}
The proof follows the same line as that of Theorem \ref{th:main}. Combining Lemma \ref{th:multi_symplectic_commutant} and Lemma \ref{lem:multi_displacement_commutant} directly yields the result.
 \end{proof}

As a remark, if \(\boldsymbol{\alpha}=0\), the displacement part is absent. In this case the full group \(O(k)\) commutes with \(\Lambda(S)^{\otimes k}\). Thus the reduction from \(O(k)\) to \(O(k-1)\) is caused by the nonzero displacement direction.

\subsection{Detection of multi-mode non-Gaussian states}

For a $d$-mode Gaussian state, we write
\begin{equation}
    \rho_G=
    D_{\boldsymbol{\alpha}}\Lambda(S)
    \rho_{\boldsymbol{\lambda}}
    \Lambda(S)^\dagger D_{\boldsymbol{\alpha}}^\dagger,
\end{equation}
where
\begin{equation}
   \rho_{\boldsymbol{\lambda}}
    =
    \rho_{\lambda_1}\otimes\cdots\otimes\rho_{\lambda_d},
\end{equation}
and
\begin{equation}
    \rho_{\lambda_r}
    =
    (1-\lambda_r)\sum_{n=0}^{\infty}\lambda_r^n\ket n\bra n,
    \qquad r=1,\cdots,d.
\end{equation}
Here $\lambda_1,\cdots,\lambda_d$ are the thermal parameters obtained from the Williamson normal form of the covariance matrix.

Let $O\in O(k)$ satisfy $O\mathbf e_1=\mathbf e_1$. Then by Theorem \ref{th:multi_jacobi_commutant},
\begin{equation}
\begin{aligned}
    \operatorname{Tr}\!\left(\rho_G^{\otimes k}\Gamma_d(O)\right)
    &=
    \operatorname{Tr}\!\left(
    D_{\boldsymbol{\alpha}}^{\otimes k}
    \Lambda(S)^{\otimes k}
    \rho_{\boldsymbol{\lambda}}^{\otimes k}
    \Lambda(S)^{\dagger\otimes k}
    D_{\boldsymbol{\alpha}}^{\dagger\otimes k}
    \Gamma_d(O)
    \right) \\
    &=
    \operatorname{Tr}\!\left(
    \rho_{\boldsymbol{\lambda}}^{\otimes k}\Gamma_d(O)
    \right).
\end{aligned}
\end{equation}
Thus, for a Gaussian state, this expectation value depends only on the thermal parameters \(\lambda_1,\ldots,\lambda_d\).

Now we compute the remaining thermal expectation value. Since $\rho_{\boldsymbol{\lambda}}^{\otimes k}
    =
    \bigotimes_{r=1}^{d}\rho_{\lambda_r}^{\otimes k},$ and $\Gamma_d(O) = \prod_{r=1}^{d}\Gamma^{(r)}(O)$, there is
\begin{equation}
\begin{aligned}
    \operatorname{Tr}\!\left(
    \rho_{\boldsymbol{\lambda}}^{\otimes k}\Gamma_d(O)
    \right)
    &=
    \prod_{r=1}^{d}
    \operatorname{Tr}\!\left(
    \rho_{\lambda_r}^{\otimes k}\Gamma^{(r)}(O)
    \right).
\end{aligned}
\end{equation}
For each \(r\), the single-mode formula gives
\begin{equation}
    \operatorname{Tr}\!\left(
    \rho_{\lambda_r}^{\otimes k}\Gamma^{(r)}(O)
    \right)
    =
    \frac{(1-\lambda_r)^k}{\det(I_k-\lambda_r O)}.
\end{equation}
Hence, the multi-mode Gaussian reference value is 
\begin{equation}\label{eq:multi_gaussian_reference}
    W_G(O):=\operatorname{Tr}\!\left(\rho_G^{\otimes k}\Gamma_d(O)\right)
    =
    \prod_{r=1}^{d}
    \frac{(1-\lambda_r)^k}{\det(I_k-\lambda_r O)}.
\end{equation}
Consequently, we obtain the following multi-mode non-Gaussianity criterion. Let \(\rho\) be an unknown \(d\)-mode state. Compute the covariance matrix of \(\rho\), and let \(\nu_1,\ldots,\nu_d\) be its symplectic eigenvalues. Define
\begin{equation}
    \lambda_r
    =
    \frac{\nu_r-1}{\nu_r+1},
    \qquad
    r=1,\ldots,d.
\end{equation}
Let $\rho$ be an unknown $d$-mode state and $\lambda_1,\cdots,\lambda_d$ be the Williamson thermal parameters of the Gaussian state with the same covariance matrix as $\rho$. Choose $O\in O(k)$ satisfying $O\mathbf e_1=\mathbf e_1.$ If
\begin{equation}
   \operatorname{Tr}\!\left(\rho^{\otimes k}\Gamma_d(O)\right)
    \neq
    \prod_{r=1}^{d}
    \frac{(1-\lambda_r)^k}{\det(I_k-\lambda_r O)},
\end{equation}
then $\rho$ is non-Gaussian.

\subsection{Example for the multi-mode case}

As an explicit multi-mode example, consider the two-mode Fock state
\begin{equation}
    \rho=\ket{1,1}\bra{1,1}.
\end{equation}
We take \(k=3\) copies. Since each mode of \(\ket{1,1}\) has mean photon number \(\bar n=1\), its second moments coincide with those of a thermal state with \(\bar n=1\). Therefore, the Gaussian state with the same covariance matrix is
\begin{equation}
    \rho_{\boldsymbol{\lambda}}    =    \rho_{\lambda_1}\otimes\rho_{\lambda_2},
\end{equation}
with
\begin{equation}
    \lambda_1=\lambda_2=\frac{1}{2}.
\end{equation}
Indeed, for a single-mode thermal state,
\begin{equation}
    \lambda=\frac{\bar n}{\bar n+1},
\end{equation}
and hence \(\bar n=1\) gives \(\lambda=1/2\).

Choose the orthogonal matrix $ O=e^{\theta G} $ given by Eq. \eqref{eq:G}. We take $\theta=\frac{\pi}{2\sqrt{3}}$, then $O$ has the form of Eq. \eqref{eq:O3} with
\begin{equation}
    x=\frac{1}{3},
    \qquad
    y=\frac{1-\sqrt{3}}{3},
    \qquad
    z=\frac{1+\sqrt{3}}{3}.
\end{equation}
For a two-mode Gaussian state, the reference value is
\begin{equation}
    W_G(O)=
    \prod_{r=1}^{2}
    \frac{(1-\lambda_r)^3}{\det(I_3-\lambda_r O)}.
\end{equation}
For each term, it has
\begin{equation}
    \det(I_3-\lambda O)
    =
    (1-\lambda)(1-i\lambda)(1+i\lambda)
    =
    (1-\lambda)(1+\lambda^2).
\end{equation}
For \(\lambda_1=\lambda_2=\lambda=1/2\), this gives
\begin{equation}
\begin{aligned}
    W_G(O)
    &=
    \left[
    \frac{(1-\lambda)^3}
    {(1-\lambda)(1+\lambda^2)}
    \right]^2_{\lambda=1/2}       \\
    &=
    \left[
    \frac{(1-\lambda)^2}
    {1+\lambda^2}
    \right]^2_{\lambda=1/2}       \\
    &=
    \left(\frac{1}{5}\right)^2
    =
    \frac{1}{25}.
\end{aligned}
\end{equation}

On the other hand, for the state $\rho=\ket{1,1}\bra{1,1}$, we have
\begin{equation}
    \rho^{\otimes 3}
    =
    \left(\ket{1}\bra{1}\right)^{\otimes 3}
    \otimes
    \left(\ket{1}\bra{1}\right)^{\otimes 3},
\end{equation}
after grouping the three copies according to the two physical modes. Moreover,
\begin{equation}
    \Gamma_2(O)
    =
    \Gamma^{(1)}(O)\Gamma^{(2)}(O),
\end{equation}
where \(\Gamma^{(r)}(O)\) acts on the three copies of the \(r\)-th physical mode. Hence
\begin{equation}
\begin{aligned}
    W_{\rho}(O)
    &:=
    \operatorname{Tr}
    \left(
        \rho^{\otimes 3}\Gamma_2(O)
    \right)                                      \\
    &=
    \left[
    \bra{1,1,1}\Gamma(O)\ket{1,1,1}
    \right]^2 .
\end{aligned}
\end{equation}
For one physical mode, the three-copy matrix element is
\begin{equation}
    \bra{1,1,1}\Gamma(O)\ket{1,1,1}
    =
    \operatorname{Per}(O).
\end{equation}
For the above matrix \(O\), one has
\begin{equation}
\begin{aligned}
    \operatorname{Per}(O)=
    x^3+y^3+z^3+3xyz  =    \frac{5}{9}.
\end{aligned}
\end{equation}
\begin{equation}
    W_{\rho}(O)
    =
    \left(\frac{5}{9}\right)^2
    =
    \frac{25}{81}.
\end{equation}
Since
\begin{equation}
    \frac{25}{81}\neq \frac{1}{25},
\end{equation}
the two-mode Fock state \(\rho=\ket{1,1}\bra{1,1}\) violates the Gaussian reference value and is therefore detected as non-Gaussian.

\section{Conclusion}

In this work, we investigated the commutants of multi-copy symplectic and displacement operators, which together form the Gaussian unitary part of the Jacobi group. We showed that the multi-copy symplectic action commutes with the orthogonal copy-mixing group \(O(k)\), while the displacement action reduces this symmetry to the stabilizer subgroup
\[
    \{O\in O(k):O\mathbf{e}_1=\mathbf{e}_1\}
    \cong O(k-1).
\]
This gives an explicit group-theoretic structure behind the Gaussian multi-copy transformations. Using this structure, we proposed a non-Gaussianity detection criterion. 

The present construction also suggests a possible connection between non-Gaussianity and asymmetry. Since Gaussian states with the same purity are invariant under the commutant expectation values constructed above, the sensitivity of \(\rho^{\otimes k}\) to the orthogonal copy-mixing transformations can be interpreted as a resource beyond Gaussianity. In particular, one may define an asymmetry-based quantity through the quantum Fisher information \cite{dai2023approximate,marvian2014extending}. For example, if \(G\in\mathfrak{so}(k)\) and \(H_G=i\Gamma'(G)\), one may consider quantum-Fisher-information-type quantities of the form \(F_Q(\rho^{\otimes k},H_G)\). This may lead to an asymmetry-based measure of non-Gaussianity. A detailed study of this direction will be left for future work.

\vskip 0.5cm \noindent {\bf Acknowledgements}.
This work was supported by the National Natural Science  Foundation of China, Grant Nos. 12401609 and 12501629, and the Youth Innovation Promotion Association of CAS, Grant No. 2023004.

\bibliographystyle{apsrev4-1}

\bibliography{bibNONGaussian}

\end{document}